\newtheorem{theorem}{Theorem}[section]
\newtheorem{corollary}[theorem]{Corollary}
\newtheorem{lemma}[theorem]{Lemma}
\newtheorem{definition}[theorem]{Definition}
\DeclareMathOperator{\E}{E}
\newcommand{\local}{LOCAL\xspace}
\newcommand{\qlocal}{quantum-\local}
\newcommand{\detlocal}{det-\local}
\newcommand{\randlocal}{rand-\local} 
\newcommand{\slocal}{S\local}
\newcommand{\nonsign}{non-signaling\xspace}
\newcommand{\expect}[1]{\mathbb{E}\left[#1\right]}
\DeclareMathOperator{\dist}{dist} %
\newcommand{\neighborhood}{\NN}
\newcommand{\neigh}{\neighborhood}
\DeclareMathOperator{\poly}{poly} %
\newcommand{\view}{\VV}
\renewcommand{\AA}{\mathcal{A}}
\newcommand{\CC}{\mathcal{C}}
\newcommand{\EE}{\mathcal{E}}
\newcommand{\FF}{\mathcal{F}}
\newcommand{\HH}{\mathcal{H}}
\newcommand{\LL}{\mathcal{L}}
\newcommand{\NN}{\mathcal{N}}
\newcommand{\PP}{\mathcal{P}}
\newcommand{\VV}{\mathcal{V}}
\newcommand{\sthat}{\;|\;}
\DeclareMathOperator{\inpt}{in}
\DeclareMathOperator{\oupt}{out}
\newcommand{\localVar}{\mathrm{x}}
\newcommand{\problem}{\Pi}
\newcommand{\outcome}{\mathrm{O}}
\newcommand{\lpproblem}{\PP}
\DeclareMathOperator{\regularlog}{log}
\renewcommand{\log}{\protect\@ifstar{\regularlog^*}{\regularlog}}
\newcommand{\Vin}{\VV_{\text{in}}}
\newcommand{\Ein}{\EE_{\text{in}}}
\newcommand{\Vout}{\VV_{\text{out}}}
\newcommand{\Eout}{\EE_{\text{out}}}
\newcommand{\outLbl}{\ell_\text{out}}
\newcommand{\lift}{\mathrm{lift}}
\newcommand{\lbadgraph}{\mathsf {badGraph}}
\newcommand{\lpromise}{\mathsf {promise}}
\newcommand{\linearizable}{\mathsf {linearizable}}
\newcommand{\lM}{\mathsf{M}}
\newcommand{\lB}{\mathsf{B}}
\newcommand{\lA}{\mathsf{A}}
\newcommand{\lP}{\mathsf{P}}
\begin{document}

\begin{flushleft}
	\huge\bf
	New Limits on Distributed Quantum Advantage: Dequantizing Linear Programs
\end{flushleft}
\smallskip
\begin{flushleft}
	\setlength{\parskip}{3pt}
	
	\textbf{Alkida Balliu} · Gran Sasso Science Institute
	
	\textbf{Corinna Coupette} · Aalto University
	
	\textbf{Antonio Cruciani} · Aalto University
	
	\textbf{Francesco d'Amore} · Gran Sasso Science Institute
	
	\textbf{Massimo Equi} · Aalto University
	
	\textbf{Henrik Lievonen} · Aalto University
	
	\textbf{Augusto Modanese} · Aalto University
	
	\textbf{Dennis Olivetti} · Gran Sasso Science Institute
	
	\textbf{Jukka Suomela} · Aalto University
\end{flushleft}
\smallskip
\paragraph{Abstract.}
In this work, we give two results that put new limits on distributed quantum advantage in the context of the LOCAL model of distributed computing:
\begin{enumerate}
	\item We show that there is no distributed quantum advantage for any linear program. Put otherwise, if there is a quantum-LOCAL algorithm $\AA$ that finds an $\alpha$-approximation of some linear optimization problem $\Pi$ in $T$ communication rounds, we can construct a classical, deterministic LOCAL algorithm $\AA'$ that finds an $\alpha$-approximation of $\Pi$ in $T$ rounds. As a corollary, all classical lower bounds for linear programs, including the KMW bound, hold verbatim in quantum-LOCAL.
	\item Using the above result, we show that there exists a locally checkable labeling problem (LCL) for which quantum-LOCAL is strictly weaker than the classical deterministic SLOCAL model.
\end{enumerate}
Our results extend from quantum-LOCAL to finitely dependent and non-signaling distributions, and one of the corollaries of our work is that the non-signaling model and the SLOCAL model are incomparable in the context of LCL problems: By prior work, there exists an LCL problem for which SLOCAL is strictly weaker than the non-signaling model, and our work provides a separation in the opposite direction. 
	
\thispagestyle{empty}
\setcounter{page}{0}
\newpage
\section{Introduction}

In this work, we explore the landscape of distributed graph algorithms in two dimensions:
\begin{enumerate}
	\item Classical distributed algorithms vs.\ distributed \emph{quantum} algorithms.
	\item Combinatorial graph problems (e.g., maximum independent set) vs.\ their \emph{fractional} linear-programming relaxations (e.g., maximum \emph{fractional} independent set).
\end{enumerate}
We prove two results that put limits on distributed quantum advantage; see \cref{fig:overview} for a schematic overview:
\begin{enumerate}
	\item We show that there is no distributed quantum advantage for any linear program.
	\item Using the above result, we give a new separation between quantum algorithms and classical algorithms (more precisely, between quantum-LOCAL and SLOCAL models).
\end{enumerate}

\begin{figure}
	\centering
	\includegraphics[page=1,scale = 0.8]{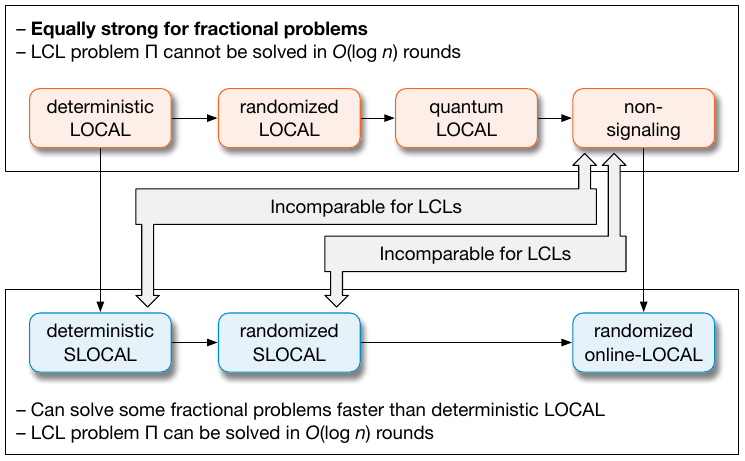}
	\caption{Overview of the results and relevant models of computing.}\label{fig:overview}
\end{figure}

\subsection{Contribution 1: Dequantizing Fractional Algorithms}\label{ssec:intro-deq}

\subparagraph*{Setting: Fractional Problems in the LOCAL Model and the Quantum-LOCAL Model.}

Let us first recall what fractional linear-programming relaxations of graph problems are. For example, in the maximum independent set problem, the task is to label each node $v \in V$ with a value $x_v \in \{0,1\}$ such that for each edge $\{u,v\}$, we satisfy $x_u + x_v \le 1$, and we are maximizing $\sum_v x_v$. Now the maximum \emph{fractional} independent set problem is the obvious linear-programming relaxation, where the range of values is $x_v \in [0,1]$. See \cref{ssec:distr-lp} for more details.

In the LOCAL model of distributed computing, 
a set of computers (nodes) communicates via bidirectional links (edges) defined by an input graph, and computation proceeds in synchronous rounds. 
In each round, each computer may send a message of unlimited size to each of its neighbors and update its state based on the messages it receives, 
and the main complexity measure is the number of communication rounds required to solve the given problem.
The quantum-LOCAL model is like the LOCAL model, except that we replace all computers with quantum computers and all communication links with quantum communication links capable of exchanging qubits.
The main source of potential advantage of the quantum-LOCAL over LOCAL comes from the fact that the messages may contain qubits entangled with the state of the computer, 
and this has indeed been used to show an advantage for LCL problems~\cite{balliu2025quantum-lcl}.

\subparagraph*{New Result.}

We prove the following result that is applicable to any fractional linear-programming relaxation:
\begin{oframed}
	\noindent
	Assume $\AA$ is a distributed algorithm that finds an $\alpha$-approximation of some linear optimization problem $\Pi$ in $T$ communication rounds in the \textbf{quantum-LOCAL model}. Then there is also an algorithm $\AA'$ that finds an $\alpha$-approximation of $\Pi$ in $T$ communication rounds in the \textbf{classical deterministic LOCAL model}.
\end{oframed}
\noindent
It was already well-known that distributed graph algorithms for solving linear programs can be \emph{derandomized} for free---without losing anything in the running time or approximation ratio. What we show is that they can also be \emph{dequantized}. This can be pushed even further, beyond the quantum-LOCAL model: We show that it holds even if $\AA$ is an algorithm in the \emph{non-signaling model}, which is a model strictly stronger than quantum-LOCAL (see \cref{sec:models} for detailed definitions).

\subparagraph*{Technical Overview.}
The proof is near-trivial: $\AA'$ outputs the \emph{expected value} of the output of $\AA$. A bit more precisely, let $X_v$ be the random variable that represents the output of $\AA$ at node $v$, and let $x_v = \expect{X_v}$ be the output produced by $\AA'$. Now if we look at the entire output vector, it holds that $x = \expect{X}$. In particular, $x$ is a linear combination of $\alpha$-approximate solutions $X$ to problem $\Pi$, and hence $x$ is also an $\alpha$-approximate solution to~$\Pi$. We give the details in \cref{sec:dequant}.

%%%%%%%%%%%%%%%%%%%%%%
%%%% Implications %%%%
%%%%%%%%%%%%%%%%%%%%%%

\subparagraph*{Implications for the KMW Bound.}

One of the seminal lower bounds for distributed graph algorithms is the KMW
lower bound by Kuhn, Moscibroda, and Wattenhofer \cite{KMW}; see Appendix~\ref{apx:kmw}\@. In 2009, Gavoille,
Kosowski, and Markiewicz \cite{gavoille2009} observed that this lower bound
holds also for quantum-LOCAL and non-signaling models. However, to our knowledge,
the proof was never published---they merely write that \enquote{by careful analysis, it
	is easy to~prove}.

However, now we no longer need to do the careful analysis. The key observation is that the KMW bound is inherently a bound for fractional problems. As we now know that there is no quantum advantage for fractional problems, we know that all the implications of the KMW bound indeed hold verbatim also in the quantum-LOCAL model.

While this is not a new result, at least now there is a proof explicitly written down, and the proof is fundamentally different from the idea of carefully inspecting the inner workings of the KMW bound and checking that the argument holds.

\subsection{Contribution 2: A New Separation for SLOCAL vs.\ Non-Signaling Distributions}

\subparagraph*{Setting: LCL Problems and SLOCAL Model.}

Let us now move on from fractional optimization problems to \emph{locally checkable labeling problems} or LCLs. First defined by Naor and Stockmeyer in the 1990s \cite{NaorS95}, LCLs constitute a particularly simple family of graph problems that manages to capture many problems of interest in our field. LCL problems are graph problems that can be described by listing a \emph{finite} set of valid labelings in local neighborhoods. There is a long line of work on understanding the landscape of LCL problems, e.g., \cite{NaorS95,chang_exponential_2019,dahal2023,chang_time_2019,rozhon_polylogarithmic-time_2020,brandt_automatic_2019,akbari23online-local,akbari25online-quantum,coiteux-roy24}, and this line of research has recently started to explore the interplay between various models of distributed computing, including not only the classical LOCAL model, quantum-LOCAL model, and non-signaling distributions, but also models such as SLOCAL and online-LOCAL.

The SLOCAL model \cite{ghaffari2017} is a sequential counterpart of the LOCAL model: An adversary queries nodes in a sequential order, and when a node $v$ is queried, the algorithm has to choose the final label of node $v$. To do that, the algorithm can gather the full information on its radius-$T$ neighborhood, store all this information at node $v$ (for the benefit of other nodes nearby that get queried later), and use all this information to choose its label. We refer to \cref{sec:preliminaries} for precise definitions, and to \cref{fig:overview} for an overview of how SLOCAL is related to other recently-studied models of computing.

\subparagraph*{New Result.}

The SLOCAL model is at least as strong as the LOCAL model because with the full knowledge of the radius-$T$ neighborhood, it can simulate any LOCAL algorithm for $T$ steps. However, its exact relation with the quantum-LOCAL model and non-signaling distributions has been an open question---in essence, the question is whether the ability to manipulate qubits is more useful or less useful than the ability to process nodes in some sequential order.
We prove the following new result:
\begin{oframed}
	\noindent
	There exists an LCL problem $\Pi$ such that $\Pi$ can be solved with locality $O(\log n)$ in the SLOCAL model, but it \emph{cannot} be solved with locality $O(\log^{1.49} n)$ in the non-signaling model or the quantum-LOCAL model.
\end{oframed}

\subparagraph*{Technical Overview.}

While this may seem disconnected from the results in \cref{ssec:intro-deq}, it is, in a sense, a direct corollary. One implication of the KMW bound is that the problem of finding a maximal matching in a bipartite graph of degree at most $\Delta$ cannot be solved in $o(\log \Delta / \log \log \Delta)$ rounds. On the other hand, this is a problem that is trivial to solve in the SLOCAL model in $O(1)$ rounds. So at this point, we have a \emph{family} of LCL problems $\Pi'(\Delta)$ parameterized by~$\Delta$ that is trivial in SLOCAL but nontrivial in the non-signaling model. We can now plug this family of problems into the construction of \cite{balliu2025quantum-lcl}, as maximal matchings satisfy their key technical requirement of \emph{linearizability}. Deploying this machinery yields a single genuine LCL problem that is strictly easier to solve in SLOCAL than in the non-signaling and quantum-LOCAL models.  We give the details in \cref{sec:separation}.

\subparagraph*{Implications for the Landscape of Models.}

By recent work \cite{balliu25shared-rand}, there is an LCL problem that is strictly easier to solve in the non-signaling model than in the SLOCAL model. Here, we have obtained a separation in the converse direction. Therefore, in particular:
\begin{oframed}
	\noindent
	The SLOCAL model and the non-signaling model are incomparable in the context of LCL problems; neither is able to simulate the other with constant overhead. 
\end{oframed}
\noindent
We contrast this with, e.g., the situation for the randomized online-LOCAL model, i.e., the SLOCAL model augmented with a global memory that all nodes can access~\cite{akbari23online-local}.
The randomized online-LOCAL model \emph{can} simulate both the SLOCAL model and the non-signaling model \cite{akbari25online-quantum}.

%%%%%%%%%%%%%%%%%%%%%%%
%%%% Preliminaries %%%%
%%%%%%%%%%%%%%%%%%%%%%%

\section{Preliminaries}\label{sec:preliminaries}
Natural numbers are denoted as $\mathbb{N}$ and include $0$, while we use $\mathbb{N}_+ = \mathbb{N}\setminus\{0\}$ for natural numbers without $0$. Moreover, we use the notation $[n]=\{1,\dots,n\}$ for any $n\in\mathbb{N}_+$.
Throughout this work, we consider only undirected graphs. A graph \(G=(V,E)\) consists of a set of nodes \(V\) and a set of edges \(E\), and we use the notation \(V(G)\) and \(E(G)\), respectively, if we need to specify which graph we refer to.
Given a subset of nodes \(A\), \(G[A]\) is the subgraph induced by \(A\), that is, \(G[A]=(A,E_A)\), where \((u,v)\in E_A\) if and only if \(u,v\in A\) and \((u,v)\in E\). 

For any two nodes \(u,v \in V\) in graph \(G=(V,E)\), \(\dist_G(u,v)\) is the length of a shortest path starting from \(u\) and ending at \(v\). If there is no ambiguity, we write \(\dist(u,v)\) without the graph subscript. For subsets of nodes \(A,B \subseteq V\), we can then define \(\dist(A,B) = \min_{(u,v)\in A \times B} \dist(u,v)\). For \(T\in\mathbb{N}\), we define the radius-\(T\) neighborhood of a node \(u\) as \(\neighborhood_T[u]=\{v \sthat \dist(u,v) \le T\}\), i.e., the set of nodes at distance at most \(T\) from \(u\). This can be extended to a subset of nodes \(A \subseteq V\) as \(\neighborhood_T[A]= \cup_{u \in A} \neighborhood_T[u]\).
All the above notions of distances and neighborhoods can be easily generalized to the case where we consider the edges instead of the nodes of a graph.

If \(v\) is a node and \(e=\{u,v\}\) is an edge, then the pair \((v,e)\) is a \emph{half-edge}, and \(\HH(G)\) is the set of all pairs \((v,e)\) where \(v \in V(G)\) and \((v,e)\) is a half-edge. 
A \emph{centered graph} is a pair \((G,c)\), where \(G\) is a graph and \(c \in V(G)\) is one of its nodes, called the \emph{center}. The \emph{eccentricity} of \((G,c)\) is the maximum distance \(\max_{u \in V(G)} \dist_G(c,u)\) of a node from the center.

We now introduce the notion of \emph{labeled graph}.
\begin{definition}[Labeled graph \cite{balliu2025quantum-lcl}]\label{def:labeled-graph}
	Suppose that \(\VV\) and \(\EE\) are sets of labels.
	A graph \(G = (V,E)\) is said to be \((\VV,\EE)\)-\emph{labeled} if the following holds:
	\begin{enumerate}
		\item Each node \(v \in V\) is assigned a label from \(\VV\);
		\item Each half-edge \((v,e) \in V \times E\) satisfying $v \in e$ is assigned a label from \(\EE\).
	\end{enumerate}
\end{definition}

We will consider problems that, in general, take as input a labeled graph and require as output a labeling such that some constraints are satisfied. 
To formalize the class of problems we consider, for a \((\VV,\EE)\)-labeled graph \(G\), let \(\ell(v) \in \VV\) be the label assigned to node \(v\) and \(\ell((v,e))\in \EE\) be the label assigned to half-edge \((v,e)\).

To be able to consider labelings restricted to a subset of nodes, we say that, if we are given sets \(A\) and \(B\), a subset \(A' \subseteq A\), and a function \(f : A \to B\), then \(f \restriction_{A'}\) is the function \(g: A' \to B\) such that \(f(x) = g(x)\) for all \(x \in A'\), and \(\restriction\) is called the restriction operator. 
Given a \((\VV,\EE)\)-labeled graph \(G\) with labeling function \(\ell\), and another \((\VV,\EE)\)-labeled graph \(G'\) with labeling \(\ell'\),  suppose that \(\varphi: V(G) \to V(G')\) is an isomorphism between \(G\) and \(G'\).
We say that \(\ell\) and \(\ell'\) are isomorphic if the labeling is preserved under \(\varphi\).

\subsection{Locally Checkable Labeling (LCL) Problems}
\label{sec:def-lcls}

In the distributed setting, the well-studied class of \emph{locally checkable labeling (LCL) problems} \cite{NaorS95} plays a central role, as they are those problems where the validity of a solution can be checked \emph{locally}, that is, within a constant radius \(r\). The intuition is that a node can gather its radius-\(r\) neighborhood to check whether its output satisfies the constraints of the LCL.

To formally define what it means to satisfy an LCL, we first introduce the notion of a set of constraints. This is the set of all valid labelings of a neighborhood of radius \(r\) and maximum degree \(\Delta\).
\begin{definition}[Set of constraints]
	\label{def:preliminaries:constraint}
	Let \(r, \Delta \in \mathbb{N}\) be constants.
	Consider two finite label sets \(\VV\) and \(\EE\).
	Let \(\CC\) be a finite set of pairs \((H,v_H)\), where \((H,v_H)\) is a \((\VV,\EE)\)-labeled centered graph such that the eccentricity of \(v_H\) is at most \(r\) and the degree of \(H\) is at most \(\Delta\).
	We say that \(\CC\) is an \((r,\Delta)\)-set of constraints over \((\VV,\EE)\).
\end{definition} 

After defining the notion of constraint, we can now define the notion of \emph{constraint satisfaction}. Namely, we say that a graph satisfies a set of constraints if all of its radius-\(r\) neighborhoods belong to that set.

\begin{definition}[Satisfying a set of constraints]
	\label{def:preliminaries:satisfying-constraints}
	Let \(G\) be a \((\VV,\EE)\)-labeled graph, and let \(\CC\) be an \((r,\Delta)\)-set of constraints over \((\VV,\EE)\), for some finite set of labels \(\VV,\EE\).
	The graph \(G\) satisfies \(\CC\) if the following holds:
	\begin{itemize}
		\item For every node \(u \in V(G)\), the \((\VV,\EE)\)-labeled graph \(G[\neighborhood_r[u]]\) is such that the centered graph \((G[\neighborhood_r[u]],u)\) belongs to \(\CC\).
	\end{itemize}
\end{definition}

We can now define the notion of \emph{locally checkable labeling (LCL) problems}. An LCL can be seen as a relation that specifies which pairs of input and output labelings are valid.

\begin{definition}[Locally Checkable Labeling (LCL) problems]\label{def:preliminaries:lcl-problems}
	Let \(r, \Delta \in \mathbb{N}\) be constants, and let \(\Vin\), \(\Ein\), \(\Vout\), and \(\Eout\) be finite sets of labels.
	A locally checkable labeling (LCL) problem \(\problem\) is a tuple \((\Vin, \Ein, \Vout, \Eout, \CC)\) such that the following holds:
	\begin{itemize}
		\item \(\CC\) is an \((r,\Delta)\)-set of constraints over \((\Vin \times \Vout,\Ein \times \Eout)\).  
	\end{itemize}
\end{definition}

Suppose that we are given as input a \((\Vin,\Ein)\)-labeled graph \(G\), and let \(\ell_{\inpt}\) be the input labeling function of \(G\).
Solving an LCL problem \(\problem = (\Vin, \Ein, \Vout, \Eout, \CC)\) on \(G\) means to find a labeling function \(\ell_{\oupt}\) that produces an output labeling on \(G\), turning \(G\) into a \((\Vout,\Eout)\)-labeled graph, such that the following holds:
\begin{itemize}
	\item For each node \(v \in G\), let \(\ell(v) = (\ell_{\inpt}(v), \ell_{\oupt}(v))\).
	For each half-edge \((v,e)\) in \(G\), let \(\ell((v,e)) = (\ell_{\inpt}((v,e)), \ell_{\oupt}((v,e)))\).
	Then the labeling function \(\ell\) turns \(G\) into a \({(\Vin \times \Vout,\Ein \times \Eout)}\)-labeled graph that satisfies \(\CC\) according to \cref{def:preliminaries:satisfying-constraints}.
\end{itemize}
In other words, if we are given a graph and an input labeling, solving an LCL means finding an output labeling under which the graph satisfies the set of constraints of the LCL.

\subsection{Distributed Linear Programming (LP) Problems}\label{ssec:distr-lp}

Next, we define distributed LP problems.
We consider the following distributed setting: We are given a communication graph $G=(V,E)$ and a linear program bound to $G$ of the form
\begin{align*}
	\text{optimize} \quad & \sum_{i\in\FF} c_i\cdot x_i \\
	\text{subject to} \quad & \sum_{i\in \FF}A_{j,i}\cdot x_i \unlhd b_j \quad \forall j\in \CC \\
	&  x_i \geq 0 \quad \forall i\in \FF\;,
\end{align*}
where $\FF$ is the set of variables, $\CC$ is the set of constraints, coefficients $A_{j,i},b_j,c_i$ are known locally, and the inequality $\unlhd$ can be $\leq$, $=$, or $\geq$, depending on the LP formulation. In the distributed setting, each node $v\in V$ in the network ``owns'' one or more variables $x_i\in\FF$. Furthermore, in the \local (resp. \slocal) model, each node $v\in V$ knows the local constraints and variables involving nodes within its radius-$T$ neighborhood $\NN_T[v]$. 

\subparagraph{Types of Distributed LPs.}
There are three classes of distributed LP formulations: node-based, edge-based, and node-edge-based. In the first one, each node $v\in V$ is associated with a variable $x_v$. In the second, each node has a variable $x_{(v,u)}$ for each $u\in \NN[v]$. Since an edge $(u,v)\in E$ is shared by both endpoints, we require that the two local copies coincide, i.e., $x_{(v,u)} = x_{(u,v)}$. In other words, $v$ and $u$ must agree on a common value for their shared edge variable. In the latter formulation, each node is associated with both a variable $x_v$ and a set of variables $x_{(v,u)}$ shared with its neighborhood. 

There are different types of local outputs for each class of distributed LP. For node-based LPs, the local output is a real value $x_v\in\mathbb{R}^{\geq 0}$ for each node $v\in V$, whereas for edge-based LPs, each node pair $(v,u)\in E$ agrees on and outputs a real value $x_{(u,v)}$. Consequently, for node-edge LPs, each node outputs both $x_v$ and the incident edge values $x_{(v,u)}$. These local outputs must collectively satisfy the constraint set of the LP. That is, the union of all local outputs across the network must form a globally feasible solution to the LP, meaning that the variable assignments computed and output by the nodes must jointly satisfy all constraints of the LP formulation.

We remark that, in general, distributed LPs are not LCLs. The reason is that LPs involve continuous variables and global feasibility constraints, which cannot be verified using only local information and a finite label set.

\subparagraph{Approximation Factor for Distributed LPs.}
Let $\PP$ be a linear program defined over a communication network $G = (V, E)$ with variable set $\FF$, and denote the optimal objective value by $\mathrm{OPT}$. Let $\hat{x}$ be a solution produced by a distributed algorithm after a bounded number of synchronous communication rounds. We say that the distributed algorithm achieves an \emph{$\alpha$-approximation} to $\mathcal{P}$, for some $\alpha \geq 1$, if (1) $\hat{x}$ is a feasible solution to the LP, and (2) $\sum_{i\in \FF} c_i \cdot \hat{x}_i\leq \alpha \cdot \mathrm{OPT}$ or $\mathrm{OPT}\leq \alpha\cdot \sum_{i\in \FF} c_i \cdot \hat{x}_i$ for minimization or maximization problems, respectively.
In other words, the distributed solution is within a factor $\alpha$ of the global optimum, even though each node operates with only local information.
If we are dealing with a probabilistic model of computation (e.g., \randlocal or
\nonsign; see \cref{sec:models} below), then we require that the
respective algorithm outputs an $\alpha$-approximation \emph{in expectation}.

\subparagraph{Fractional Maximum Matching.}
In this work, we consider the fractional maximum matching problem formulated as the following LP: 
\begin{align*}
	\text{maximize} \quad & \sum_{e \in E} x_e \\
	\text{subject to} \quad & \sum_{(v,u)\in E} x_{(v,u)} \leq 1 \quad \forall v\in V \\
	& 0\leq x_e \leq 1 \quad \forall e\in E
\end{align*}
Here, each variable $x_e$ corresponds to an edge $e=(v,u)\in E$ and it is ``owned'' by both endpoints $v$ and $u$ in $G$. Each node $v\in V$ is responsible for the constraint $\sum_{u\in \NN[v]} x_{(v,u)} \leq 1$, which involves all variables corresponding to the edges incident to $v$. 

Now consider any \emph{maximal matching} in the graph. By definition, a maximal matching is a matching where no additional edge can be added without violating the matching property. It is a standard result in approximation algorithms that any maximal matching is a $2$-approximation to the maximum integral matching.  Furthermore, the size of a maximum matching can be up to a factor $2/3$ smaller than the fractional maximum matching. Combining these bounds, we obtain that any maximal matching gives a feasible solution to the above LP and a solution value within a factor $3$ of the optimum.

\subsection{Models}
\label{sec:models}

In this section, we define all our computational models of interest.
\subparagraph{The \local Model.}
In the \local model of computing, we are given a distributed system of \(n\) processors (or nodes) connected through a communication network represented as a graph \(G = (V,E)\), along with an input function \(\localVar\).
Every node \(v\in V(G)\) has input data \(\localVar(v)\), which encodes the number \(n\) of nodes in the network, a unique identifier from the set \([n^c] = \{1,2,\ldots,n^c\}\),  where \(c \ge 1\) is a fixed constant, and possible inputs defined by the problem of interest (we assume nodes store both input node labels and input half-edge labels).
If computation is randomized, we call the model \emph{randomized \local} (or \randlocal), which means that \(\localVar(v)\) additionally encodes an infinite string of bits that are uniformly and independently sampled for each node, and not shared with the other nodes. If this is not the case and computation is deterministic, we call the model \emph{deterministic \local} (or \detlocal).
Computation is performed by synchronous rounds of communication. In each round, nodes can exchange messages of unbounded (but finite) size with their neighbors, and then perform an arbitrarily long (but terminating) local computation. Errors occur neither in sending messages nor during local computation.
Computation terminates when every node \(v\) outputs a label \(\outLbl(v)\).
The running time of an algorithm is the number of communication rounds, given as a function of \(n\), that are needed to output a labeling that solves the problem of interest.
In \randlocal, we also ask that the algorithm solves the problem of interest with probability at least \(1 - 1/\poly(n)\), where \(\poly(n)\) is any polynomial function in \(n\).
If an algorithm runs in \(T\) rounds and both communication and computation are unbounded, we can look at it as a function mapping radius-\(T\) neighborhoods to output labels in the deterministic case, or to a distribution of output labels in the randomized case.
Thus, we say that \(T\) is the \emph{locality} of the algorithm.

Depending on the context, we may assume that the computing units are actually the \emph{edges} of the graph, and the local variable \(\localVar(v)\) is stored inside all edges that are incident to \(v\).

\subparagraph{The \qlocal Model.}
The \qlocal model is defined like the \local model introduced above, with the following differences.
Every processor (node) can locally operate on an unbounded (but finite) number of qubits, applying any unitary transformations, and quantum measurements can be locally performed by nodes at any time. In each communication round, nodes can send an unbounded (but finite) number of qubits to their neighbors. The local output of a node still needs to be an output label encoded in classical bits. As in \randlocal, we ask that an algorithm solves a problem with probability at least \(1-1/\poly(n)\). A more formal definition of the model can be found in \cite{gavoille2009}.

\subparagraph{The \slocal Model.}
The \slocal model of computing~\cite{ghaffari2017} is a sequential counterpart of the \local model: An algorithm $\AA$ processes the nodes sequentially in an order $p=v_1,v_2,\dots,v_n$. The algorithm must work for any given order $p$. When processing a node $v$, the algorithm can query $\NN_T[v]$, and $\AA$ can read $u$'s state for all nodes $u\in \NN_T[v]$. Based on this information, node $v$ updates its own state and computes its output $y(v)$. In doing so, node $v$ can perform unbounded computation, i.e., $v$'s new state can be an arbitrary function of the queried $\NN_T[v]$. The output $y(v)$ can be remembered as a part of $v$'s state. The \emph{time complexity} $T_{\AA,p}(G,\bm{x})$ of the algorithm on graph $G$ and inputs $\bm{x}=(x(v_1),x(v_2),\dots, x(v_n))$ with respect to order $p$ is defined as the maximum $T$ over all nodes $v$ for which the algorithm queries a radius-$T$ neighborhood of $v$. The \emph{time complexity} $T_\AA$ of algorithm $\AA$ on graph $G$ and inputs $\bm{x}$ is the maximum $T_{\AA,p}(G,\bm{x})$ over all orders $p$.

\subparagraph{The \nonsign Model.}
The \nonsign model is a model of computing that abstracts from how the actual computation is happening in the network, focusing on a probabilistic description of the valid output labelings. In this model, rather than given algorithms, we are asked to produce \emph{outcomes} (also called strategies) that are functions mapping the input to a probability distribution over output labelings. In other words, given a graph and its input, a probability distribution over output labelings is assigned to the graph such that a sample from the distribution will produce a valid output labeling with high probability. The complexity of the outcome is given by its dependency radius \(T\). This means that an outcome is non-signaling beyond distance \(T\) if, for any subset \(A\) of the nodes of the graph, modifying the graph or its input at distance greater than \(T\) from \(A\) does not change the output distribution over \(A\). We proceed to give all the formal details needed to properly define the \nonsign model.

We first formally introduce the concept of \emph{outcome}. For a network \(G\), let \(\localVar\) represent the function that maps every node to its input, which includes the input labeling function, port numbers, and unique identifiers. An outcome, then, is a function mapping a network and an input \((G, \localVar)\) to a probability distribution over output labelings.
\begin{definition}[Outcome]
	Let \(\VV,\EE\) be sets of labels, 
	and let \(\FF\) be the family of all input networks \((G, \localVar)\). An outcome \(\outcome\) is a function that maps an input network \((G, \localVar)\in\FF\) to a probability distribution \(\outcome(G, \localVar)=\{(\oupt_i,p_i)\}_{i\in I}\) defined as follows:
	\begin{itemize}
		\item The set \(I\) is a set of indices.
		\item The function \(\oupt_i\) is a labeling function that maps half-edges and nodes of \(G\) to labels in \(\VV\) and \(\EE\), respectively, making \(G\) a \((\VV,\EE)\)-labeled graph.
		\item Each \(p_i\) is a non-negative probability and \(\sum_{i\in I}p_i=1\).
	\end{itemize}
\end{definition}

We say that an outcome \(\outcome\) \emph{solves} an LCL problem \(\problem\) over a family of graphs \(\FF\) with probability \(q > 0\) if, for every \(G \in \FF\) and every input data \(\localVar\), it holds that
\[
\sum_{\substack{\oupt_i \in \outcome(G, \localVar) : \\ \oupt_i\text{solves }\problem\text{ on }G }} p_i \ge q.
\]
Let \((G, \localVar)\) be an input network, and consider any subset of nodes \(S \subseteq V(G)\).
Let \(\HH(G)[S]\) be the subset of \(\HH(G)\) that contains half-edges \((v,e)\) for \(v \in S\).
The restriction of the output distribution \(\outcome(G,\localVar) = \{(\oupt_i,p_i)\}_{i \in I}\) to \(S\) is the distribution \(\outcome(G,\localVar)[S]=\{(\oupt_j,p'_j)\}_{j \in J}\), where the output-labeling functions \(\{\oupt_j\}\) assign labels only on nodes of \(S\) and on half-edges of \(\HH(G)[S]\), and the probability \(p'_j\) satisfies the following condition:
\[
p'_j=\sum_{\substack{\oupt_i \in \outcome(G, \localVar) : \\ \oupt_i \text{ coincides with } \oupt_j \\ \text{ on } S \text{ and } \HH(G)}}p_i.
\]

We now define the notion of isomorphic output distributions.
Consider two graphs \(G\) and \(G'\) such that \(\varphi:V(G)\to V(G')\) is an isomorphism.
A probability distribution \(\{(\oupt_i,p_i)\}_{i\in I}\) over output labelings  for \(G\) is isomorphic to a probability distribution  \(\{(\oupt_j,p'_j)\}_{j\in J} \) over output labelings for \(G'\) if they are preserved under the action of \(\varphi\). % CC: what exactly does `they' refer to here?

We would now like to define a special type of outcome called \emph{non-signaling outcome}. 
To this end, we first need to define the concept of \emph{view} up to distance \(T\). 
Given an input network \((G,\localVar)\) and a subset of its nodes \(A\subseteq V(G)\), consider the subgraph \(G[\neighborhood_T[A]]\) induced by \(\neighborhood_T[A]\). 
The view up to distance \(T\) of \(A\) is the pair \(\view_T(A)=(G_A,\localVar_A)\), where \(G_A\) is the graph defined as \(V(G_A)=V(G[\neighborhood_T[A]])=\neighborhood_T[A]\) and \(E(G_A)=\{(u,v) \sthat (u,v) \in \E(G[\neighborhood_T[A]]),\dist_G(u,A)<T \text{ or } \dist_G(v,A)<T\}\), and \(\localVar_A=\localVar\restriction{\neighborhood_T[A]}\). 
Intuitively, nodes in \(A\) see everything up to distance \(T\) except for the edges among the bordering nodes of \(G[\neighborhood_T[A]]\) (but they can see the labels of the half-edges incident to them).
In general, for two arbitrary graphs \(G\) and \(H\) and subsets of nodes \(A \subseteq V(G), B \subseteq V(H)\), we say that a function \(\varphi : V(G) \to V(H)\) is an isomorphism between \(\view_T(A)=(G_A,\localVar_A)\) and  \(\view_T(B)=(G_B,\localVar_B)\) if \(\varphi\) is an isomorphism between \(V(G_A)\) and \(V(G_B)\) and \(\localVar_A=\localVar_B\circ\varphi\). 
Now, a non-signaling outcome is defined as follows.
\begin{definition}[Non-signaling outcome]
	\label{def:ns-outcome}
	Let \(\outcome\) be an outcome, \(G\) and \(H\) be graphs, \(\varphi : V(G) \to V(H)\) be a function, and \(T\in\mathbb{N}\). Outcome \(\outcome\) is \emph{non-signaling beyond distance} \(T\) if, for any two subsets of nodes \(A_G \subseteq V(G), A_H \subseteq V(H)\) such that \(\varphi\) is an isomorphism between \(\view_T(A_G)\) and \(\view_T(A_H)\), the restricted distributions \(\outcome(G, \localVar_G)[A_G]\) and \(\outcome(H, \localVar_H)[A_H]\) are isomorphic under \(\varphi\).
\end{definition}
Alternatively, we can also say that \(\outcome\) has locality \(T\).
Running \(T\)-round classical or \qlocal algorithms, both with or without shared resources, yields output-labeling distributions that are non-signaling outcomes with locality \(T\). 

The \emph{\nonsign model} is, thus, a computational model where the input is a network with input \((G,\localVar)\), and an LCL problem \(\problem\) is solved if there exists a non-signaling outcome \(\outcome\) that solves \(\problem\) with success probability at least \(1 - 1/\poly(n)\), where \(n=|V(G)|\).

When the input of a problem is clear from the context, we will omit the input network \((G,\localVar)\), writing \(\outcome(G)\).
Observe that all the concepts of views and of restrictions of outcomes can be naturally defined via half-edges instead of nodes, especially when dealing with problems that only ask us to label half-edges (such definitions will be used later in \cref{sec:lower-bound}).

We further assume that all probability distributions and output labelings that define a non-signaling \(\outcome\) are \emph{computable}.
This is because a proper quantum-LOCAL algorithm can be implemented by a quantum circuit, which can be simulated in a classical computer (with costly computation).
Hence, its output distribution is computable, and we can restrict ourselves to computable output-labeling distributions.

%%%%%%%%%%%%%%%
%%%% NS-LP %%%%
%%%%%%%%%%%%%%%

\section{Dequantization for Distributed Linear Programming Problems}\label{sec:dequant}

In this section, we prove our first result, i.e., that distributed \nonsign
(and in particular also \qlocal) has no advantage over \detlocal for
distributed linear programming problems.
\begin{theorem}
	\label{thm:ns-lp}
	Let $\lpproblem$ be a distributed linear programming problem that admits a non-signaling distribution over $\alpha$-approximations with locality~$T$.
	Then there exists a deterministic \local algorithm that finds an
	$\alpha$-approximation of $\lpproblem$ with locality~$T$.
\end{theorem}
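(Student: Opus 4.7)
The plan is to construct the classical deterministic algorithm $\AA'$ by outputting, at each node, the marginal expectation of the non-signaling distribution's output for every variable the node owns. Concretely, let $\outcome$ be the non-signaling outcome with locality $T$ supported on $\alpha$-approximations of $\lpproblem$. For each variable $x_i$, let $X_i$ denote the random value of $x_i$ under $\outcome$. The algorithm $\AA'$ has each node gather its radius-$T$ view and output $\hat{x}_i := \expect{X_i}$ for every variable it owns (with edge variables handled identically by either endpoint, which will agree on the value).

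I would then verify the theorem through three short steps. First, the \textbf{locality} argument: by \cref{def:ns-outcome}, if two inputs $(G,\localVar)$ and $(G',\localVar')$ have isomorphic radius-$T$ views around the owner set $S$ of a given variable $x_i$, then $\outcome(G,\localVar)[S]$ and $\outcome(G',\localVar')[S]$ are isomorphic. Hence the marginal distribution of $X_i$ is a function of $\view_T(S)$ alone, so each node can compute $\hat{x}_i$ after $T$ rounds; the computability assumption on non-signaling outcomes from \cref{sec:preliminaries} ensures this computation is effective. Second, \textbf{feasibility}: every realization in the support of $\outcome$ is a feasible $\alpha$-approximation, so for every LP constraint $\sum_i A_{j,i} x_i \unlhd b_j$, linearity of expectation gives $\sum_i A_{j,i} \hat{x}_i = \expect{\sum_i A_{j,i} X_i} \unlhd b_j$, and likewise $\hat{x}_i \ge 0$. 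Third, \textbf{approximation}: the objective is linear, so $\sum_i c_i \hat{x}_i = \expect{\sum_i c_i X_i}$, and since each realization is within factor $\alpha$ of $\mathrm{OPT}$, so is this expectation; because $\AA'$ is deterministic, this yields a deterministic $\alpha$-approximation with locality $T$.

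I do not expect any serious obstacle here; the whole argument is essentially ``convex combinations of $\alpha$-approximations are $\alpha$-approximations'' combined with the defining property of non-signaling distributions. The only subtlety worth spelling out carefully is the consistency of edge variables (both endpoints compute the same marginal expectation because they share the same radius-$T$ view of the edge under non-signaling), and the fact that the claim requires the support — not merely the expectation — of $\outcome$ to consist of feasible $\alpha$-approximations, which is exactly what ``distribution over $\alpha$-approximations'' means in the statement.
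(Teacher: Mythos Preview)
Your proposal is correct and matches the paper's proof essentially step for step: the paper also defines $\hat{x}_i = \expect{X_i}$, proves feasibility and the approximation guarantee via linearity of expectation (its \cref{lem:expectation-approximation}), and argues locality~$T$ by observing that the marginal at a node depends only on its radius-$T$ view under non-signaling (its \cref{lem:local-expectation}). Your remark on edge-variable consistency is a welcome addition, since the paper explicitly restricts to the node-based case for simplicity.
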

For simplicity, we will consider only the case where $\lpproblem$ is a node-based
problem. It is clear how to extend the proof to the other classes of distributed LPs. The idea of the proof is relatively simple:
We first note that for any distribution of $\alpha$-approximations of a linear program~$\lpproblem$, the \emph{expectation} is also an $\alpha$-approximation; this follows directly from the convexity of a linear program and the linearity of expectation. A detailed proof can be found in Appendix~\ref{apx:missing-proofs}.
\begin{lemma}
	\label{lem:expectation-approximation}
	Let $\lpproblem$ be a linear program, and let $\outcome$ be a distribution over $\alpha$-approximations of $\lpproblem$.
	Then $\hat x = \expect{\outcome}$ is also an $\alpha$-approximation of $\lpproblem$.
\end{lemma}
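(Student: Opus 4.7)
The plan is to observe two almost-immediate facts about the expected value $\hat{x} = \expect{\outcome}$: it lies in the feasible region of $\lpproblem$, and its objective value matches the expected objective value of the distribution. Both follow from linearity of expectation together with basic convex-geometric observations about linear programs.

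First, I would argue feasibility. The feasible region of $\lpproblem$ is the intersection of half-spaces (for the $\leq$ and $\geq$ inequality constraints), affine subspaces (for the $=$ constraints), and the non-negative orthant (for the $x_i \geq 0$ constraints); this intersection is convex. Since $\outcome$ is supported on feasible solutions, and $\hat{x}$ is an expectation (hence a convex combination) of these solutions, I would apply linearity of expectation coordinate-wise to each constraint: for every $j \in \CC$,
\[
    \sum_{i \in \FF} A_{j,i} \cdot \hat{x}_i \;=\; \sum_{i \in \FF} A_{j,i} \cdot \expect{X_i} \;=\; \expect{\sum_{i \in \FF} A_{j,i} \cdot X_i} \;\unlhd\; b_j,
\]
where the last step uses that each sample $X$ satisfies the constraint and that expectation preserves $\leq$, $=$, and $\geq$. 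The non-negativity constraints $\hat{x}_i \geq 0$ are preserved similarly.

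Second, I would verify the approximation ratio. By linearity of expectation,
\[
    \sum_{i \in \FF} c_i \cdot \hat{x}_i \;=\; \expect{\sum_{i \in \FF} c_i \cdot X_i}.
\]
If $\lpproblem$ is a minimization problem, then since every $X$ in the support of $\outcome$ is an $\alpha$-approximation, the expected objective value is at most $\alpha \cdot \mathrm{OPT}$, matching the definition of an $\alpha$-approximation for $\hat{x}$. The maximization case is entirely symmetric, using $\mathrm{OPT} \leq \alpha \cdot \expect{\sum_i c_i X_i}$.

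There is no real obstacle here; the proof is short and relies only on convexity of the feasible region and linearity of expectation, both of which reflect the linear structure of $\lpproblem$. The only point that deserves explicit care is the convention regarding randomized approximations: following the paper's definition, each sample from $\outcome$ is feasible, while the approximation factor is measured in expectation. Under this reading, the two steps above combine to show that $\hat{x}$ itself is a feasible solution whose objective value lies within a factor $\alpha$ of $\mathrm{OPT}$, as required.
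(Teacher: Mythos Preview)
Your proof is correct and follows essentially the same approach as the paper's: establish feasibility of $\hat{x}$ by pushing the expectation through each linear constraint via linearity (and monotonicity) of expectation, then establish the approximation ratio by the same linearity argument applied to the objective. Your treatment is slightly more explicit (you separately note the non-negativity constraints and the min/max cases, and you flag the convention on randomized approximation), but the argument is the same.
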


Then we show that there exists a \local algorithm that can locally compute this expectation, given access to the distribution. Intuitively, the algorithm gathers its radius-$T$ neighborhood, where $T$ is the locality of the \nonsign distribution, then calls the outcome function $\outcome$ on its neighborhood to obtain a distribution of output labelings, and finally outputs the expected value of such a distribution. Again, the technical details are deferred to Appendix~\ref{apx:missing-proofs}.
\begin{lemma}
	\label{lem:local-expectation}
	Let $\outcome$ be a computable \nonsign distribution with locality~$T$ over graph family~$\FF$.
	Then there exists a \local algorithm with locality~$T$ that computes the expected outcome of this distribution everywhere.
\end{lemma}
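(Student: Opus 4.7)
The plan is to give a \local algorithm in which each node independently computes the marginal distribution of its own output coordinate under $\outcome$ and emits its expectation. The correctness rests on the fact that, by non-signaling beyond distance $T$, this marginal is determined entirely by the node's radius-$T$ view, which is precisely what a \local algorithm can access in $T$ rounds.

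First I would describe the algorithm. In $T$ rounds each node $v$ gathers the full structure and input of $G[\neighborhood_T[v]]$, obtaining its view $\view_T(v) = (G_v, \localVar_v)$. It then performs unbounded (but terminating) local computation to produce the real-valued output $\hat{x}_v = \expect{O_v}$, where $O_v$ is the $v$-coordinate random variable of $\outcome(G)$. No further communication is needed, so the locality is $T$.

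The key step is to argue that $v$ can actually compute $\expect{O_v}$ from $\view_T(v)$ alone. Specializing \cref{def:ns-outcome} to the singletons $A_G = \{v\}$ and $A_H = \{w\}$, whenever $\view_T(v)$ in $G$ is isomorphic to $\view_T(w)$ in some $(H, \localVar_H)$ in the graph family, the restricted distributions $\outcome(G)[\{v\}]$ and $\outcome(H, \localVar_H)[\{w\}]$ are isomorphic. Hence $v$ enumerates input networks $(H, \localVar_H)$ in a fixed canonical order (possible since identifiers are drawn from $\natsPos$ and each admissible graph is finite) and searches for some $w \in V(H)$ whose $T$-view is isomorphic to $\view_T(v)$. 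This search terminates because $(G, \localVar, v)$ itself qualifies. Using the computability assumption on $\outcome$, node $v$ then evaluates $\outcome(H, \localVar_H)[\{w\}]$ and outputs its expectation; the non-signaling property guarantees that this equals $\expect{O_v}$. Since every node runs the same deterministic procedure on its own view, the collection $\hat{x} = (\hat{x}_v)_{v \in V}$ coincides pointwise with $\expect{\outcome}$, as required.

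The main conceptual obstacle is recognizing that the non-signaling clause, when applied to singleton $A_G, A_H$, collapses to the statement ``the marginal at $v$ is a function of $\view_T(v)$ alone''—this is exactly the locality property the algorithm needs. A minor technical point is that LP variables are real-valued, but feasibility constraints bound each coordinate and $\outcome$ is assumed computable, so the expectation is well-defined and fits within the unbounded local-computation budget of \local (exactly in the discrete case, or up to any desired precision otherwise).
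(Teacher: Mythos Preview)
Your proposal is correct and follows essentially the same approach as the paper: gather the radius-$T$ view, find any graph in $\FF$ containing a node with an isomorphic $T$-view (the paper simply says ``construct an arbitrary $G'\in\FF$'' where you spell out a canonical enumeration), invoke $\outcome$ there, and output the expectation, with correctness following from the non-signaling property applied to singletons. Your treatment is slightly more careful about the search procedure and about computability and real-valued outputs, but the argument is the same.
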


We are now ready to state the proof of \cref{thm:ns-lp}.

\begin{proof}[Proof of \cref{thm:ns-lp}]
	Let~$\lpproblem$ be a distributed linear programming problem and let~$\outcome$ be computable \nonsign distribution with locality~$T$ over $\alpha$-approximations of~$\lpproblem$.
	By \cref{lem:local-expectation}, we have a \local algorithm~$\mathcal{A}$ that computes the expectation of~$\outcome$ locally everywhere.
	As the outcome is a vector over local elements, its expectation is a vector over the expectations of the local elements.
	Hence, $\mathcal{A}$ computes the expectation of~$\outcome$.
	By \cref{lem:expectation-approximation}, this is an $\alpha$-approximation of~$\lpproblem$.
\end{proof}

%%%%%%%%%%%%%%%%%%%%%%
%%%% Separation %%%%
%%%%%%%%%%%%%%%%%%%%%%

\section{Separation Between SLOCAL and \nonsign}\label{sec:separation}

In this section, we prove that there exists an LCL problem $\Pi$ that has complexity $O(\log n)$ in the SLOCAL model and complexity $\omega(\log n)$ in the \nonsign model.
\begin{theorem}\label{thm:separation}
	There exists an LCL problem $\Pi$ that has complexity $O(\log n)$ in the deterministic SLOCAL model and $\Omega\left(\log n \cdot \sqrt{\frac{\log n}{\log \log n}}\right)$ in the \nonsign model.
\end{theorem}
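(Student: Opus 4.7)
The plan is to realize the separation by combining the dequantized KMW bound from \cref{thm:ns-lp} with the hierarchical LCL-lifting construction of \cite{balliu2025quantum-lcl}. The overall strategy, sketched in the introduction, is to start from a parameterized family of problems that is uniformly trivial in \slocal but hard in \nonsign, and then use the lifting machinery to turn this family into a single genuine LCL $\Pi$ whose \slocal and \nonsign complexities differ by a non-constant factor.

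First, I would instantiate the family as $\{\Pi'(\Delta)\}_{\Delta \in \mathbb{N}_+}$, where $\Pi'(\Delta)$ asks for a maximal matching in bipartite graphs of maximum degree at most $\Delta$. On the \slocal side, each $\Pi'(\Delta)$ admits a trivial $O(1)$-locality algorithm: when processing a node, match it to any incident edge whose other endpoint is still unmatched in the stored state. On the \nonsign side, recall from \cref{ssec:distr-lp} that any maximal matching is a $3$-approximation to the fractional maximum matching LP. The KMW lower bound \cite{KMW} then implies that any \detlocal algorithm producing an $O(1)$-approximation to this LP on graphs of degree at most $\Delta$ requires $\Omega(\log \Delta / \log \log \Delta)$ rounds, and by \cref{thm:ns-lp} the same lower bound transfers to the \nonsign model. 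Thus $\Pi'(\Delta)$ has \slocal locality $O(1)$ and \nonsign locality $\Omega(\log \Delta / \log \log \Delta)$, uniformly in $\Delta$.

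Second, I would feed $\{\Pi'(\Delta)\}$ into the lifting construction of \cite{balliu2025quantum-lcl}. That framework takes a \emph{linearizable} parameterized family of LCLs and compiles it into a single LCL $\Pi$ that internally encodes $\Theta(\log n)$ independent scales, so that any valid output for $\Pi$ must locally solve an instance of the family at every scale. This gives a matching \slocal upper bound of $\Theta(\log n) \cdot O(1) = O(\log n)$ and a \nonsign lower bound of $\Theta(\log n) \cdot \Omega(\log \Delta / \log \log \Delta)$. Choosing $\Delta = 2^{\Theta(\sqrt{\log n \cdot \log \log n})}$ so that
\[
    \frac{\log \Delta}{\log \log \Delta} = \Theta\!\left(\sqrt{\frac{\log n}{\log \log n}}\right),
\]
the combined lower bound evaluates to $\Omega\!\left(\log n \cdot \sqrt{\log n / \log \log n}\right)$, matching the theorem statement while keeping the \slocal upper bound at $O(\log n)$.

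The main obstacle I expect is verifying that bipartite maximal matching satisfies the precise linearizability predicate required by \cite{balliu2025quantum-lcl}: informally, one must show that any correct partial output on a subgraph can be extended to a valid global maximal matching, which should reduce to the elementary fact that any partial matching can be greedily completed to a maximal one. A secondary technical point is ensuring that the lifting composes the per-scale lower bound additively across scales in the \nonsign model (not merely in \qlocal); since \cite{balliu2025quantum-lcl} is already phrased in the \nonsign framework, this should amount to plugging our \cref{thm:ns-lp}-based per-scale lower bound directly into their generic composition theorem and optimizing the scale parameter $\Delta$ as above.
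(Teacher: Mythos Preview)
Your high-level strategy matches the paper: instantiate the \cite{balliu2025quantum-lcl} lift with maximal matching, use the greedy \slocal algorithm for the upper bound, and combine \cref{thm:ns-lp} with KMW for the \nonsign lower bound. However, three technical points in your description are off and would derail the execution.

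First, \emph{linearizability} in \cite{balliu2025quantum-lcl} is not an extendability property. It is a syntactic condition on the white-node constraint: it must be expressible by a triple $(F,L,P)$ specifying the allowed first label, last label, and consecutive-pair labels along the \emph{ordered} incident edges. The paper checks this for maximal matching explicitly (\cref{lem:matching-linearizable}) with labels $\{\lM,\lB,\lA,\lP\}$, so that each white node's half-edge word matches the regular expression $\lP^* \mid \lB^*\lM\lA^*$.

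Second, the lift does not ``encode $\Theta(\log n)$ independent scales'' and does not take a parameterized family. It takes a \emph{single} linearizable problem $P$ (with unbounded white-node degree) and outputs an LCL on constant-degree graphs by replacing each white node with an octopus gadget of diameter $\Theta(\log n)$; the $\log n$ factor comes from this diameter blowup. The paper then proves two transfer lemmas: \cref{lem:slocal-ub} gives $O(T(n)\log n)$ in \slocal from a $T(n)$ \slocal algorithm for $P$, and \cref{lemma:ns:lower} gives an $\Omega(T(n^{1/3})\log n)$ \nonsign lower bound from a $T(n)$ \nonsign lower bound for $P$. Plugging $T(n)=\Omega(\sqrt{\log n/\log\log n})$ directly yields the claimed bound; there is no separate optimization over $\Delta$.

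Third, the \nonsign composition is \emph{not} already in \cite{balliu2025quantum-lcl}; that paper handles only randomized \local and \qlocal (see \cref{thm:quantum-lift}). The present paper proves the \nonsign transfer from scratch (\cref{lemma:ns:lift}) by showing that restricting a \nonsign outcome for $\Pi$ to the port-gadget roots and pulling it back through the gadget map yields a \nonsign outcome for $P$ with locality shrunk by a factor of $\Theta(\log n)$. You would need to supply this argument, not cite it.
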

We devote the rest of this section to proving \Cref{thm:separation}.

\subsection{Overview}
In order to define the problem $\Pi$, we borrow ideas from \cite{balliu2025quantum-lcl}. We start by giving a recap of the main ideas presented in \cite{balliu2025quantum-lcl}.

\subparagraph{Recap of Previous Results.}
The authors of \cite{balliu2025quantum-lcl} introduced the notion of \emph{linearizable problems}, which are locally checkable problems that are not necessarily LCLs. They proved that, if there exists some linearizable problem $P$ with some complexity $f(n)$ for some function $f$, then there exists some LCL problem $\Pi = \lift(P)$ with some complexity $f'(n)$, where $f'$ depends on $f$. For small-enough $f$, the function $f'$ is a multiplicative factor $\Theta(\log n)$ larger than $f$. Interestingly, both the quantum complexity and the standard complexity are increased by this $\Theta(\log n)$ factor.
In more detail, the authors of \cite{balliu2025quantum-lcl} proved the following:
\begin{enumerate}
	\item In \cite{balliu2024quantum}, it has been shown that there exists a problem $P$ with quantum complexity $O(1)$ and randomized LOCAL complexity $\Omega(\min\{\Delta,\log_\Delta \log n\})$. By taking a suitable value of $\Delta$, this result implies a lower bound of $\Omega(\frac{\log \log n}{\log \log \log n})$.
	\item The problem $P$ can be expressed as a linearizable problem.
	\item The authors defined a function $\lift$ that takes as input a linearizable problem $P$ and returns an LCL problem $\Pi = \lift(P)$.
	\item The authors showed that $\Pi  = \lift(P)$ has the following complexities:
	\begin{itemize}
		\item $O(\log n)$ in quantum-LOCAL.
		\item $\Omega(\log n \cdot \frac{\log \log n}{\log \log \log n})$ in \randlocal.
	\end{itemize}
\end{enumerate}
Note that, while the problem $P$ itself does not have any super-constant lower bound when $\Delta = O(1)$, this construction allows us to nevertheless obtain a problem $\Pi = \lift(P)$ with a super-constant lower bound as a function of $n$ on graphs in which $\Delta = O(1)$.

\subparagraph{Our Approach.}
In essence, we show that the construction of \cite{balliu2025quantum-lcl} also preserves complexities in SLOCAL and in \nonsign, and we will use the maximal matching problem, phrased as a linearizable problem, as $P$. We will obtain the following:
\begin{enumerate}
	\item Our problem $P$ will be the maximal matching problem, phrased as a linearizable one.
	\item The problem $P$ has complexity $O(1)$ in SLOCAL, even on graphs of unbounded degree.
	\item The problem $\Pi = \lift(P)$ has complexity $O(\log n)$ in SLOCAL.
	\item Maximal matching has a \randlocal lower bound of $\Omega(\sqrt{\frac{\log n}{\log \log n}})$, proved as part of the KMW lower bound \cite{KMW}.
	\item Maximal matching is a $3$-approximation of fractional maximum matching, and the KMW bound, which is more general, holds for this latter problem as well. Thus, by \Cref{thm:ns-lp}, the lower bound for maximal matching also holds in \nonsign.
	\item We prove that, in \nonsign, an upper bound of $o(\log n \cdot \sqrt{\frac{\log n}{\log \log n}})$ for $\Pi$ would imply an upper bound of $o(\sqrt{\frac{\log n}{\log \log n}})$ for $P$, contradicting the KMW lower bound.
\end{enumerate}

In order to achieve this, we prove a result similar to the one of \cite{balliu2025quantum-lcl}, with the difference that we consider SLOCAL upper bounds, and \nonsign lower bounds.

\subsection{The Definition of \texorpdfstring{\boldmath $\Pi = \lift(P)$}{Pi = lift(P)}}\label{ssec:lift}
We summarize the definition of $\Pi = \lift(P)$ that appeared in \cite{balliu2025quantum-lcl}. 
At a high level, the problem $\Pi$ is an LCL with inputs (i.e., nodes and node-edge pairs have input labels that come from a finite set) that is defined as a combination of two LCL problems:
\begin{itemize}
	\item The LCL problem $\Pi^{\lbadgraph}$, which is a problem defined on any graph.
	\item The LCL problem $\Pi^{\lpromise}$, which is a problem defined on some specific class $\mathcal{G}$ of graphs labeled with some input (which is the input of $\Pi$). Note that, in order for a graph $G$ to be in $\mathcal{G}$, the input given to the nodes of $G$ must satisfy some specific local constraints.
\end{itemize}
In particular, $\Pi^{\lbadgraph}$ asks us to produce some output that satisfies, among others, the following properties:
\begin{itemize}
	\item Each node is either marked (labeled with some specific output labels) or unmarked (labeled $\bot$).
	\item If $G \in \mathcal{G}$, then no node of $G$ is marked.
\end{itemize}
Moreover, it is shown that there exists a deterministic $O(\log n)$ LOCAL algorithm $\mathcal{A}$ solving $\Pi^{\lbadgraph}$ on any graph $G$, such that the output of $\mathcal{A}$ satisfies that each connected component induced by unmarked nodes is in $\mathcal{G}$.
Specifically, the authors of \cite{balliu2025quantum-lcl} proved the following.
\begin{lemma}[\cite{balliu2025quantum-lcl}]\label{lem:pibadgraph-valid}
	Let $G \in \mathcal{G}$. Then, the only valid solution for $\Pi^{\lbadgraph}$ on $G$ is the one assigning $\bot$ to all nodes.
\end{lemma}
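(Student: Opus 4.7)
The plan is to exploit the fact that $\Pi^{\lbadgraph}$ is engineered precisely so that every marked node must be locally certifiable by a violation of one of the constraints defining membership in $\mathcal{G}$. First, I would inspect the local constraint set of $\Pi^{\lbadgraph}$ as spelled out in~\cite{balliu2025quantum-lcl}, and isolate the subset of rules that govern when a non-$\bot$ output label may appear on a node. These rules should fall into two categories: consistency rules that relate the non-$\bot$ outputs of nearby nodes (describing how marked regions are structured, connected, or oriented), and prohibition rules that forbid a node from being marked unless its constant-radius neighborhood contains a local witness that the input labeling is inconsistent with membership in $\mathcal{G}$. It is this second category that carries the content of the lemma.

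Next, I would invoke the hypothesis $G \in \mathcal{G}$. Membership in $\mathcal{G}$ is itself defined by a collection of local constraints on the input labeling, so every node's radius-$r$ neighborhood of $G$ already satisfies all of them. Consequently, no local witness of badness exists anywhere in $G$, and by the prohibition rules identified above, no node can be marked in any valid solution of $\Pi^{\lbadgraph}$. Hence the output at every node is forced to be $\bot$.

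To complete the argument, I would verify that the all-$\bot$ labeling is itself a valid solution on $G \in \mathcal{G}$, i.e., that it satisfies every constraint of $\Pi^{\lbadgraph}$ left over after ruling out marks. This reduces to checking that the rules governing an entirely $\bot$-labeled neighborhood are satisfied whenever the input conforms to $\mathcal{G}$. This should hold by construction, since $\Pi^{\lbadgraph}$ is designed as a badness-detection problem whose trivial output is meant to be admissible exactly on good graphs.

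The main obstacle is confirming that the prohibition rules of $\Pi^{\lbadgraph}$ are genuinely exhaustive: no mark-producing local configuration, however exotic, should be permissible without a corresponding input violation of the $\mathcal{G}$-constraints. This requires a careful traversal of the constraint list of $\Pi^{\lbadgraph}$ from~\cite{balliu2025quantum-lcl}, checking each marked output label and each admissible combination of marked half-edges against the local definition of $\mathcal{G}$. Since the construction of $\Pi^{\lbadgraph}$ was explicitly designed with this property in mind, the check should go through routinely; nonetheless, it is the step where any subtle mismatch between the formalization of $\Pi^{\lbadgraph}$ and of $\mathcal{G}$ would surface, and so it is the part of the proof deserving the most attention.
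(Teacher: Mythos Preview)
The paper does not actually prove this lemma; it is imported verbatim from \cite{balliu2025quantum-lcl} and stated without proof. Your plan---to unpack the constraint set of $\Pi^{\lbadgraph}$ from \cite{balliu2025quantum-lcl}, identify the rules that force any non-$\bot$ label to be backed by a local witness of non-membership in $\mathcal{G}$, and then observe that no such witness can exist when $G \in \mathcal{G}$---is exactly the right shape for reconstructing the original argument, and there is nothing to compare it against here.
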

\begin{lemma}[\cite{balliu2025quantum-lcl}]\label{lem:pibadgraph-ub}
	Let $G$ be any graph. There exists a solution for $\Pi^{\lbadgraph}$ where each connected component induced by nodes outputting $\bot$ is a graph in $\mathcal{G}$. Moreover, such a solution can be computed in $O(\log n)$ deterministic rounds in the LOCAL model.
\end{lemma}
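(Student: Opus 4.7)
The plan is to combine an $O(1)$-round local certification with an $O(\log n)$-round network decomposition. Since $\mathcal{G}$ is defined through the LCL $\Pi^{\lpromise}$, membership in $\mathcal{G}$ depends only on radius-$r$ local views for some constant $r$. Each node $v$ can therefore decide in $O(r) = O(1)$ rounds whether its input view is locally consistent with some graph in $\mathcal{G}$; if it is not, call $v$ \emph{bad}. If no node is bad, then $G$ already satisfies the $\mathcal{G}$-local constraints everywhere, and by \cref{lem:pibadgraph-valid} we may simply output $\bot$ at every node, giving a valid $\Pi^{\lbadgraph}$ solution whose unique unmarked component is $G \in \mathcal{G}$.

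If bad nodes exist, I would form the auxiliary graph $H$ on the set of bad nodes in which two bad nodes are adjacent whenever their $G$-distance is at most $2r+1$. Applying the Rozhoň--Ghaffari deterministic network decomposition to $H$ gives, in $O(\log n)$ rounds, a partition of the bad nodes into clusters of $H$-diameter $O(\log n)$ grouped into $O(1)$ color classes, so that clusters in the same color class are pairwise separated in $G$ by more than $2r+1$. I would then process the color classes sequentially: for each cluster $C$ in the current class, mark every node in $\neigh_r[C]$ with a consistent labeling chosen from $\Pi^{\lbadgraph}$'s marked labels. Because $C$ has $G$-diameter $O(\log n)$, one leader in $C$ can collect the full local topology and a valid marking in $O(\log n)$ rounds and then broadcast the chosen labels back; summed over $O(1)$ color classes, this takes $O(\log n)$ deterministic rounds total.

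It remains to check that each connected component of $\bot$-labeled nodes belongs to $\mathcal{G}$. By construction, every such component consists entirely of good nodes (bad nodes have all been marked), and the outer layer of every component abuts a radius-$r$ buffer $\neigh_r[C] \setminus C$ whose local views still agree with $\mathcal{G}$'s local constraints on the unmarked side. Hence every radius-$r$ neighborhood in the unmarked component is locally consistent with $\mathcal{G}$, and since $\mathcal{G}$ is defined by such local constraints, the component lies in $\mathcal{G}$. The main obstacle is showing that inside each cluster one can always complete the marked labeling consistently with both the internal $\Pi^{\lbadgraph}$ constraints and the boundary with a $\mathcal{G}$-valid unmarked ring; this relies on the design of the marked side of $\Pi^{\lbadgraph}$ in \cite{balliu2025quantum-lcl}, which must be rich enough to admit a valid completion on every bounded-degree cluster regardless of the legal unmarked boundary it meets. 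Given that flexibility, the total complexity is $O(\log n)$, dominated by the decomposition and the intra-cluster coordination.
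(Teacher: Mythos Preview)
The paper does not prove this lemma; it is quoted verbatim from \cite{balliu2025quantum-lcl} and used as a black box, so there is no ``paper's own proof'' to compare against. That said, your proposal has a concrete gap that would prevent it from establishing the stated $O(\log n)$ bound.

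Your use of network decomposition is miscalibrated. Rozho\v{n}--Ghaffari does not produce clusters of diameter $O(\log n)$ in $O(1)$ color classes; it gives $O(\log n)$ colors with polylogarithmic (originally $O(\log^3 n)$) strong diameter, and it runs in $\operatorname{poly}(\log n)$ rounds, not $O(\log n)$ rounds. With $O(\log n)$ colors processed sequentially and $\Omega(\log n)$ rounds per color for intra-cluster coordination, you are already at $\Omega(\log^2 n)$ even before accounting for the decomposition itself. There is no known deterministic $(O(1),O(\log n))$-decomposition in $O(\log n)$ rounds on general bounded-degree graphs, so this route cannot reach the target complexity.

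The intended argument in \cite{balliu2025quantum-lcl} does not go through a generic decomposition at all; it exploits the specific structure of $\mathcal{G}$. A proper instance is built from tree-like gadgets whose height is at most $O(\log n)$ (a gadget of height $\ell$ has $\Theta(2^\ell)$ nodes), so every node can, in $O(\log n)$ rounds, walk to the root of its alleged gadget and verify the entire gadget's structure against the input labeling. The labels of $\Pi^{\lbadgraph}$ are designed so that whenever this verification fails, a node can output a locally checkable ``pointer toward the defect'' rather than~$\bot$, and these pointers compose without any global clustering step. In other words, the $O(\log n)$ bound comes from the bounded depth of the gadgets encoded in $\mathcal{G}$, not from decomposing an arbitrary graph. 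Your final paragraph correctly senses that everything hinges on $\Pi^{\lbadgraph}$ admitting a consistent marked completion, but you then defer exactly that point to \cite{balliu2025quantum-lcl}; once you are relying on the bespoke label set anyway, the decomposition detour is both unnecessary and too expensive.
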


The problem $\Pi$ is defined such that it is first required to solve $\Pi^{\lbadgraph}$, and then, on each connected component induced by unmarked nodes, it is required to solve $\Pi^{\lpromise}$. We will later describe the problem $\Pi^{\lpromise}$, the definition of which will depend on the given linearizable problem $P$.  Now we argue that, in order to prove our lower and upper bounds, we can restrict our attention to graphs that are in $\mathcal{G}$ and to the problem $\Pi^{\lpromise}$.

In \cite{balliu2025quantum-lcl}, the quantum-LOCAL upper bound for $\Pi$ is obtained as follows:
\begin{itemize}
	\item First, apply \Cref{lem:pibadgraph-ub}. That is, in $O(\log n)$ deterministic LOCAL rounds, we obtain a solution for $\Pi^{\lbadgraph}$ satisfying that each connected component induced by nodes outputting $\bot$ is a graph in $\mathcal{G}$.
	\item Then, use a quantum algorithm to solve $\Pi^{\lpromise}$ on each connected component induced by nodes outputting $\bot$. This requires $O(\log n)$ quantum rounds. 
\end{itemize}
Since an $O(\log n)$ deterministic LOCAL algorithm can be directly executed in SLOCAL (i.e., SLOCAL is at least as strong as LOCAL), and since an SLOCAL algorithm obtained by composing two different SLOCAL algorithms has an asymptotic complexity equal to the sum of the complexities of the two composed algorithms, it is clear from the quantum algorithm that, in order to provide an $O(\log n)$ SLOCAL algorithm for $\Pi$, it is sufficient to provide an $O(\log n)$ SLOCAL algorithm for $\Pi^{\lpromise}$ on graphs that are in $\mathcal{G}$.

The randomized LOCAL lower bound for $\Pi$ is obtained by considering graphs $G \in \mathcal{G}$. On these graphs, by \Cref{lem:pibadgraph-valid}, the only valid solution for $\Pi^{\lbadgraph}$ is the one assigning $\bot$ to all nodes. By the definition of $\Pi$, this implies that, on $G$, it is required to solve $\Pi^{\lpromise}$. For our \nonsign lower bound, we will follow the exact same strategy.

\subsection{The Graph Family \texorpdfstring{\boldmath $\mathcal{G}$}{G}}
In order to define the family $\mathcal{G}$, we need to first introduce the notion of \emph{proper instances}.
At a high level, a proper instance is a graph that can be obtained by starting from some graph $G'$ (which is not necessarily a simple graph) and replacing nodes with some \emph{gadgets} according to some rules. Then, a graph $G \in \mathcal{G}$ will be obtained by labeling a proper instance in some specific way.
In the following, we report the definition of some objects as given in \cite{balliu2025quantum-lcl}.
The basic building block is the notion of \emph{tree-like gadget}, of which we give an example in \Cref{fig:tree-like-and-octopus} (top), while the formal definition can be found in Appendix~\ref{apx:missing-definitions}~(\cref{def:tree-like-gadget}).

\begin{figure}[t]
	\centering
	\captionsetup[subfigure]{justification=centering}
	\begin{subfigure}{.48\textwidth}
		\centering
		\includegraphics[page=1,width=0.6\textwidth]{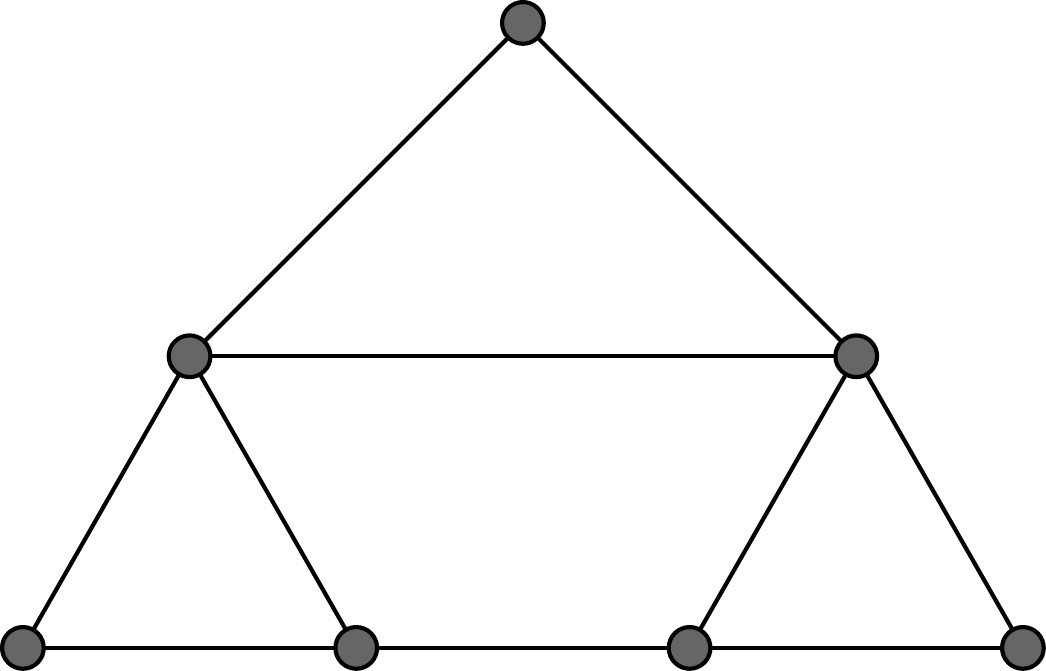}\\[4pt]
		\includegraphics[page=2,width=0.6\textwidth]{figs-gadgets.pdf}
		\caption{}
		\label{fig:tree-like-and-octopus}
	\end{subfigure}%
	\begin{subfigure}{.48\textwidth}
		\centering
		\includegraphics[page=3,width=\textwidth]{figs-gadgets.pdf}
		\caption{}
		\label{fig:proper}
	\end{subfigure}
	\caption{\textbf{(a)} A tree-like gadget at the top, and an octopus gadget at the bottom. \textbf{(b)} A proper instance at the bottom and, at the top, the graph obtained by contracting each octopus gadget into a single node. }
	\label{fig:tree-like-and-octopus-and-proper}
\end{figure}

The next building block is called \emph{octopus gadget}. At a high level, an octopus gadget is obtained by starting from a tree-like gadget, and connecting one or two additional tree-like gadgets to each ``leaf'' of the tree-like gadget. See \Cref{fig:tree-like-and-octopus} (bottom) for an example and Appendix~\ref{apx:missing-definitions} (\cref{def:octopus-gadget}) for a formal definition.

We can now define the family of proper instances. An example is shown in \Cref{fig:proper}.
\begin{definition}[Proper instance \cite{balliu2025quantum-lcl}]\label{def:proper-instance}
	Let $G=(V,E)$ be a graph.
	We say that \(G\) is a \emph{proper instance} if there exists a node labeling function \(\lambda: V \to \{\textsf{intra-octopus}, \textsf{inter-octopus}\}\) with the following properties.
	\begin{enumerate}
		\item Every connected component in the subgraph induced by nodes labeled \(\textsf{intra-octopus}\) is an octopus gadget (according to \cref{def:octopus-gadget}).
		\item The subgraph induced by nodes labeled  \(\textsf{inter-octopus}\) does not contain any edge.
		\item A node $v$ labeled \(\textsf{intra-octopus}\) is connected to a node labeled \(\textsf{inter-octopus}\) if and only if $v$ has coordinates \((w-1,0)\) in the port gadget $P$ containing $v$, where $w$ is the height of $P$ (that is, $v$ is the left-most leaf of the port gadget containing $v$).
	\end{enumerate}
\end{definition}

The authors of \cite{balliu2025quantum-lcl} proved that proper instances are \emph{locally checkable}, in the sense that there exist a set of local constraints $\mathcal{C}$ and finite sets of labels $\VV$ and $\EE$ for which an arbitrary graph $G$ can be \((\VV,\EE)\)-labeled such that the constraints $\CC$ to be satisfied on all nodes, if and only if $G$ is a proper instance. More precisely, they proved the following statements.
\begin{lemma}[\cite{balliu2025quantum-lcl}]
	Let \(G\) be any non-empty connected graph that is \((\VV, \EE)\)-labeled such that \(\CC\) is satisfied at all nodes. 
	Then \(G\) is a proper instance according to \cref{def:proper-instance}.
\end{lemma}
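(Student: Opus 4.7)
The plan is to exhibit an explicit label set \((\VV, \EE)\) together with a finite set of constraints \(\CC\) whose local satisfaction forces exactly the combinatorial structure of \cref{def:proper-instance}. Concretely, I would let \(\VV\) carry (i) a top-level role flag in \(\{\textsf{intra-octopus}, \textsf{inter-octopus}\}\); (ii) for intra-octopus nodes, a sub-role indicating whether the node lies in a head-gadget or in a port-gadget together with the port index \((i,j)\); (iii) the height \(w\) of the tree-like gadget the node inhabits; and (iv) the node's coordinates \((l,k)\) in that gadget, with \(0 \le l < w\) and \(0 \le k < 2^l\). The half-edge label set \(\EE\) would encode the \emph{type} of each incidence: \(\{\textsf{sibling-left}, \textsf{sibling-right}, \textsf{parent}, \textsf{child-left}, \textsf{child-right}, \textsf{port-to-root}, \textsf{root-to-port}, \textsf{leaf-to-inter}, \textsf{inter-to-leaf}\}\). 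The constraint family \(\CC\) then consists of radius-\(1\) checks enforcing (a) that a \textsf{sibling-left} half-edge at a node of coordinate \((l,k)\) points to a neighbor of the same head/port identity with coordinate \((l, k-1)\), and symmetrically for the other intra-gadget edge types; (b) that every node whose coordinates admit a parent, child, or sibling in the given tree-like gadget actually carries the corresponding half-edge; (c) that \textsf{port-to-root}/\textsf{root-to-port} pairs glue port-gadget coordinates \((0,0)\) to head-gadget coordinates \((x-1,i)\) consistently with the declared port index; and (d) that \textsf{inter-octopus}-labeled nodes attach only to intra-octopus leaves via \textsf{leaf-to-inter}/\textsf{inter-to-leaf}, and never to one another.

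Given a \((\VV, \EE)\)-labeled \(G\) satisfying \(\CC\) at every node, I would then construct the required labeling \(\lambda\) of \cref{def:proper-instance} by simply reading off the role flag, and, within each connected component of the intra-octopus-induced subgraph, the head/port sub-role playing the role of \(\lambda\) in \cref{def:octopus-gadget}. The verification proceeds property by property: item 2 of \cref{def:proper-instance} follows from constraint (d); item 3 follows from the combination of (d) with the requirement that the \textsf{leaf-to-inter} label be placed only on the half-edge at a node of port-gadget coordinate \((w-1, 0)\). For item 1, I need to show that each intra-octopus connected component is an octopus gadget, which in turn unfolds into verifying that each head or port sub-subgraph is a tree-like gadget of the claimed height, and that the root/port gluing of \cref{def:octopus-gadget} item 4 holds; the former is forced by (a) and (b), and the latter by (c).

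The main obstacle is the step from local coordinate consistency to the \emph{global} tree-like structure of each head- and port-subgraph: a priori, one could imagine fragmentary sub-configurations that individually satisfy all radius-\(1\) constraints yet fail to assemble into a complete tree-like gadget of the declared height. The resolution is that each node carries both its height \(w\) and its coordinates, so a node knows exactly which of its expected neighbors should exist; constraint (b) forces every missing expected incidence to be manifested. A standard induction on layer depth, using uniqueness of coordinates (enforced by a constraint that two neighbors of the same head/port identity cannot carry identical coordinates), then shows that the set of nodes bearing a given head or port label is in bijection with \(\{(l,k) : 0 \le l < w,\ 0 \le k < 2^l\}\) and that the induced edges are exactly those prescribed by \cref{def:tree-like-gadget}. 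Finiteness of \(G\) rules out pathological infinite extensions. Once every head- and port-subgraph is identified as a tree-like gadget of the correct height, the remaining clauses of \cref{def:octopus-gadget} and \cref{def:proper-instance} follow directly from the half-edge label checks, certifying \(G\) as a proper instance.
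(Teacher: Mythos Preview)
The paper does not actually prove this lemma; it is imported verbatim from \cite{balliu2025quantum-lcl}, so there is no in-paper proof to compare against. That said, your proposal has a genuine structural flaw that would make it unusable here regardless of how the cited work argues.

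The label set \(\VV\) you describe is \emph{infinite}. You propose that each intra-octopus node carries (iii) the height \(w\) of its tree-like gadget and (iv) its coordinates \((l,k)\) with \(0 \le l < w\) and \(0 \le k < 2^l\). But in a proper instance the heights of head- and port-gadgets are unbounded (in the construction of \cref{sec:lower-bound} they are \(\Theta(\log n)\)), and the horizontal coordinate \(k\) ranges over \(\{0,\dots,2^{l}-1\}\), which is exponential in the layer depth. Likewise, the port index \((i,j)\) you store in (ii) has \(i \in \{0,\dots,2^{x-1}-1\}\) for a head-gadget of height \(x\), again unbounded. None of these can be packed into a finite alphabet, so \((\VV,\EE,\CC)\) as you define it is not a set of constraints in the sense of \cref{def:preliminaries:constraint}, and the resulting \(\Pi^{\lbadgraph}\) and \(\Pi\) would not be LCLs.

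The whole difficulty of the cited lemma is precisely to certify the tree-like and octopus structure using only \emph{constant-size} labels. This typically requires encoding only local, relative information (e.g., parities, left/right/parent/child roles, boundary markers) rather than absolute heights or coordinates, and then arguing that any globally consistent labeling with those finite tags forces the intended shape. Your inductive argument in the last paragraph leans essentially on the global coordinate data to rule out fragmentary configurations; once that data is removed to keep \(\VV\) finite, the induction no longer goes through as written, and a different mechanism is needed.
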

\begin{lemma}[\cite{balliu2025quantum-lcl}]\label{lem:proper-can-be-labeled}
	Let \(G\) be a proper instance as defined in \cref{def:proper-instance}.
	Then there exists a \((\VV, \EE)\)-labeling of \(G\) that satisfies the constraints in \(\CC\) at all nodes.
\end{lemma}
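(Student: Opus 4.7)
The plan is to construct the labeling explicitly from the structural decomposition that the definition of a proper instance already hands us, and then argue pointwise that the local view at every node matches one of the allowed configurations in $\CC$. Conceptually, the constraint set $\CC$ of \cite{balliu2025quantum-lcl} is engineered so that a node can locally certify exactly (i) which type of gadget element it represents, and (ii) how its adjacencies fit the recursive structure of the tree-like and octopus gadgets; thus the lemma is essentially the converse direction of the previous lemma and follows by producing the labels that were implicit in the structure all along.

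First, I would unpack the hypothesis. Since $G$ is a proper instance, \cref{def:proper-instance} gives us a classification $\lambda : V \to \{\textsf{intra-octopus},\textsf{inter-octopus}\}$. Restricting to intra-octopus components, \cref{def:octopus-gadget} then supplies, inside each component, a labeling distinguishing the unique head-gadget from the port-gadgets via indices $(i,j) \in I$; and inside each such gadget, \cref{def:tree-like-gadget} supplies coordinates $(l_u, k_u)$ for every node. All of this data is canonical up to the choices in the definitions, and we freely fix one such witness.

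Next, I would define the labeling. Each node $v \in V$ receives a node label in $\VV$ that records: its intra/inter-octopus type; if intra-octopus, the identifier of its enclosing gadget (\textsf{head} or a port index $(i,j)$) together with the height $w$ of that gadget; and its coordinates $(l_v, k_v)$ within that gadget. Each half-edge $(v,e)$ with $e = \{u,v\}$ receives a half-edge label in $\EE$ that records the role of the edge as seen from $v$: a \emph{sibling} edge within a layer (distinguishing left vs.\ right neighbor by the sign of $k_u - k_v$), a \emph{parent} or \emph{child} edge between layers, a \emph{port-to-head attachment} for the unique edge linking the root $(0,0)$ of a port-gadget to the appropriate $(x-1,i)$ node of the head-gadget, or an \emph{inter-octopus attachment} for the unique edge joining the left-most leaf of a port-gadget to an inter-octopus node. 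Inter-octopus nodes only carry labels identifying their type and the roles of their incident half-edges.

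Finally, I would verify that $\CC$ is satisfied at every node. The constraint set $\CC$ from \cite{balliu2025quantum-lcl} is tailored exactly so that, at each node, consistency of the coordinates with those of its neighbors suffices to certify the defining conditions of tree-like gadgets (bullet 3 of \cref{def:tree-like-gadget}), of octopus gadgets (bullets 1--4 of \cref{def:octopus-gadget}), and of proper instances (bullets 1--3 of \cref{def:proper-instance}). Because the labels above faithfully encode the witnesses we extracted, the radius-$r$ neighborhood of any node will match one of the finitely many patterns enumerated in $\CC$. The main obstacle is the bookkeeping for the \emph{exceptional} positions: the head-gadget's root, the boundary nodes $(x-1,i)$ that attach port-gadgets, the port roots $(0,0)$, the left-most leaves $(w-1,0)$ that attach inter-octopus nodes, and the inter-octopus nodes themselves. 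Each of these must be handled as its own pattern in $\CC$ so that the absence of a parent, child, or sibling in a given direction is permitted precisely where the structure demands it; checking that every such exceptional neighborhood admits a matching pattern, and that no label choice in the definition forces two incompatible patterns on the same node, is the one place where the argument is not automatic and where I would spend the bulk of the write-up.
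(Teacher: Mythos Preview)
The paper does not actually prove this lemma; it is quoted verbatim from \cite{balliu2025quantum-lcl} and used as a black box, so there is no in-paper proof to compare your sketch against. Your high-level strategy---read off the structural witnesses from \cref{def:tree-like-gadget,def:octopus-gadget,def:proper-instance} and write them onto nodes and half-edges---is the natural one and is almost certainly what the cited paper does.

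That said, your concrete labeling has a real gap: you propose that the node label of $v$ records the height $w$ of its enclosing gadget, the port index $(i,j)$, and the coordinates $(l_v,k_v)$. All of these quantities are unbounded (heights and layer depths grow with the instance, and $i$ ranges over $\{0,\dots,2^{x-1}-1\}$ for arbitrary $x$), so the resulting label set $\VV$ would be infinite. But the paper explicitly requires $\VV$ and $\EE$ to be \emph{finite} sets---this is what makes the property locally checkable in the LCL sense and is stated just above the lemma. As written, your labeling does not fit into the framework.

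The fix is to encode only a bounded amount of \emph{local} information per node: rather than full coordinates, one records roles such as ``root of gadget'', ``leftmost/rightmost in my layer'', ``left/right child'', ``head vs.\ port'', etc., and uses the half-edge labels to orient parent/child/sibling relations. The point is that the defining adjacency conditions in \cref{def:tree-like-gadget} are themselves purely local (they only compare $(l_u,k_u)$ with $(l_v,k_v)$ for adjacent $u,v$), so a constant-size palette of role labels suffices to certify them. Your write-up should make this reduction to finite labels explicit; otherwise the argument does not go through.
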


We are now ready to define the family $\mathcal{G}$.
\begin{definition}
	A \((\VV, \EE)\)-labeled graph $G$ is in $\mathcal{G}$ if and only if the constraints in \(\CC\) are satisfied at all nodes.
\end{definition}

\subsection{Linearizable Problems}
In order to define $\Pi^{\lpromise}$, we first need to introduce the notion of \emph{linearizable problems}.
A linearizable problem $\Pi^{\linearizable} = (\Sigma, (F,L,P),B)$ is defined as follows.
Let us consider a hypergraph described as its bipartite incidence graph, where white nodes represent the original nodes, black nodes represent hyperedges, and we are also given an ordering of the edges of the white nodes. Intuitively, a problem $\Pi^{\linearizable} = (\Sigma, (F,L,P),B)$ is linearizable if it is possible to encode it as another LCL where the constraints for white nodes can be expressed solely in terms of consecutive edges in the ordering. More specifically, sets $F$ and $L$ define which labels are allowed for the first and last edges, respectively, while set $P$ contains pairs of labels that can appear consecutively. In other words, $P$ specifies the combinations in which labels can be assigned to a node and its successor in the ordering. We refer to Appendix~\ref{apx:missing-definitions} (\cref{def:Pi_linearizable}) for a formal definition.

We will use \emph{maximal matching} as the running example of a linearizable problem. This is an important step to reach our goal, as we will use the fact that maximal matching can be expressed as a linearizable problem to separate \slocal from \nonsign. Maximal matching is a problem defined on graphs, and hence, we will describe a linearizable problem on hypergraphs of rank $2$. In this case, the black constraint describes the edge constraints, and the white constraint describes the node constraints. An example of a solution to the maximal matching problem encoded as a linearizable problem is provided in \Cref{fig:matching-linearizable}, and the following lemma formally states the existence of a linearized version of maximal matching. We defer the details of the proof to Appendix~\ref{apx:missing-proofs}.

\begin{lemma}\label{lem:matching-linearizable}
	There exists a linearizable problem $\Pi^{\linearizable} = (\Sigma, (F,L,P),B)$ satisfying the following:
	\begin{itemize}
		\item A solution for $\Pi^{\linearizable}$ can be converted into a maximal matching in $0$ deterministic LOCAL rounds.
		\item A solution for maximal matching can be converted into a solution for $\Pi^{\linearizable}$ in $0$ deterministic LOCAL rounds.
	\end{itemize} 
\end{lemma}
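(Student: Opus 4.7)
The plan is to define an explicit linearizable problem over the $4$-element label set $\Sigma = \{M, N^B, N^A, N^F\}$ where, on a half-edge $(u,e)$ of the incidence graph $G$, the label $M$ means that $e$ is in the matching, $N^F$ means that $u$ has no incident matching edge, and $N^B$ (resp.\ $N^A$) means that $u$ is matched via a different incident edge that comes Before (resp.\ After) $e$ in $u$'s fixed local order. Splitting the ``not in the matching, endpoint is matched'' case into the two labels $N^B$ and $N^A$ is the key trick that will let us encode a global ``at most one $M$'' condition inside the purely pairwise formalism of a white constraint.

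For the black constraint $B$ I would take $\{M, M\}$ together with every $2$-element multiset drawn from $\{N^B, N^A, N^F\}$ \emph{except} $\{N^F, N^F\}$. Forbidding every mixed multiset $\{M, N^X\}$ forces the two endpoints of an edge to agree on whether it lies in the matching, while forbidding $\{N^F, N^F\}$ says that no non-matching edge may have two free endpoints, which is precisely maximality.

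For the white constraint I would set $F = \{M, N^B, N^F\}$, $L = \{M, N^A, N^F\}$, and $P = \{(N^B, N^B), (N^B, M), (M, N^A), (N^A, N^A), (N^F, N^F)\}$. A short case analysis on the admissible successor of each label shows that the sequences realized at a white node are exactly the all-$N^F$ sequence (unmatched node) and the ``phased'' sequences of the form $(N^B)^{\ast}\, M\, (N^A)^{\ast}$ (matched node, with the unique $M$-edge placed at the correct local position). In particular, at most one $M$ ever appears at a white node, so the edges whose two half-edges are both labeled $M$ always form a matching.

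To finish, I would verify the two $0$-round conversions. Given a valid labeling for $\Pi^{\linearizable}$, each node declares an incident edge to be matched iff its own half-edge on that edge is labeled $M$; the black constraint makes both endpoints agree, the white constraint guarantees the matching property, and the exclusion of $\{N^F, N^F\}$ guarantees maximality. Conversely, given a maximal matching, each node independently labels each of its half-edges according to the intended semantics above, and both constraints are satisfied by construction. Every decision depends only on data already stored at one endpoint, so no communication is needed. The main obstacle I anticipate is exactly the one that the $N^B/N^A$ split resolves: ensuring that a purely pairwise state machine can forbid multiple $M$'s, which it does because $(N^B, M)$ is the only pair entering $M$ and $(M, N^A)$ the only pair leaving it, so once $M$ occurs the sequence can only continue with $N^A$'s.
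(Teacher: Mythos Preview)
Your proposal is correct and is essentially identical to the paper's own proof: under the relabeling $M\leftrightarrow\mathsf{M}$, $N^B\leftrightarrow\mathsf{B}$, $N^A\leftrightarrow\mathsf{A}$, $N^F\leftrightarrow\mathsf{P}$, your sets $\Sigma$, $F$, $L$, $P$, and $B$ coincide exactly with the paper's, and your analysis of the admissible white-node sequences and the two $0$-round conversions matches the paper's argument.
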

\begin{figure}[t]
	\centering
	\includegraphics[page=4,width=0.6\textwidth]{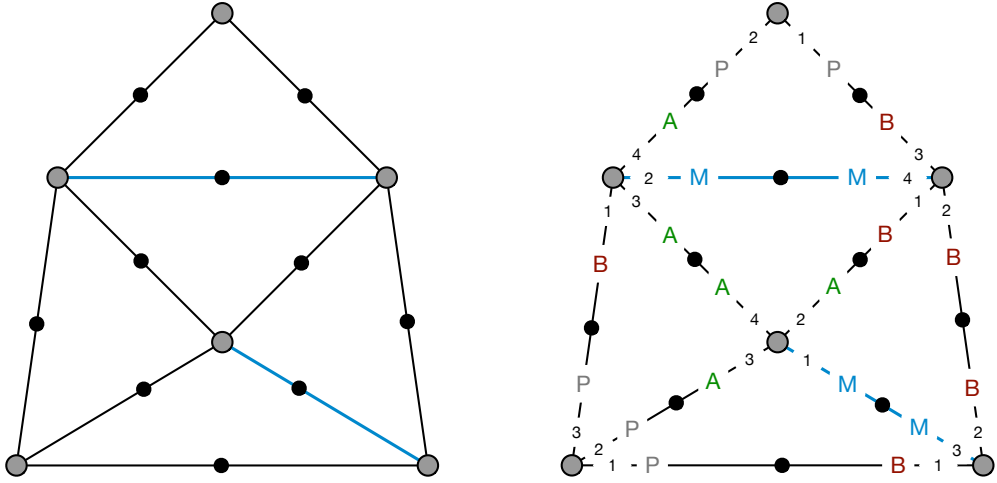}
	\caption{On the left, we show a solution to the maximal matching problem, where black nodes represent hyperedges of rank 2 and blue edges are in the matching. On the right, the same solution is encoded as a solution to $\Pi^{\linearizable}$.}
	\label{fig:matching-linearizable}
\end{figure}

We now connect the notion of a linearizable problem with the problem  $\Pi^{\lpromise}$ that we aim to define. The problem $\Pi^{\lpromise}$ is defined in \cite{balliu2025quantum-lcl} as an LCL problem (i.e., by describing its local constraints). This problem is defined as a function of a given problem $\Pi^{\linearizable} = (\Sigma, (F,L,P),B)$. For our purposes, we do not need the details of the definition of $\Pi^{\lpromise}$, and it is sufficient to state the properties that any valid solution needs to satisfy.
Informally, $\Pi^{\lpromise}$ is defined such that, if we contract each octopus gadget into a single white node, and we treat each inter-cluster node as a black node, it must hold that, in the resulting graph, we get a solution for $\Pi^{\linearizable}$. The formal definition of $\Pi^{\lpromise}$ can be found in Appendix~\ref{apx:missing-definitions} (\cref{def:p-promise}) or in \cite{balliu2025quantum-lcl}.

\subsection{SLOCAL Upper Bound and \nonsign Lower Bound}\label{sec:lower-bound}

We now establish both an upper bound in the SLOCAL model and a lower bound in the \nonsign model, deferring the formal proofs to Appendix~\ref{apx:missing-proofs}. For \slocal, we have the following upper bound.
\begin{lemma}\label{lem:slocal-ub}
	Let $T(n)$ be an upper bound on the SLOCAL complexity of $\problem^{\linearizable}$ that holds also if the given graph contains parallel edges. Then the SLOCAL complexity of $\Pi$ is upper bounded by $O(T(n) \log n)$.
\end{lemma}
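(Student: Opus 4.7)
The plan mirrors the two-phase strategy used for the quantum-LOCAL upper bound in \cite{balliu2025quantum-lcl} recapped at the beginning of this section, with the quantum component replaced by an SLOCAL simulation. First I would invoke \Cref{lem:pibadgraph-ub} to solve $\Pi^{\lbadgraph}$ on $G$ in $O(\log n)$ deterministic LOCAL rounds, which executes verbatim in SLOCAL with the same locality. After this phase, every connected component $H$ induced by $\bot$-labeled nodes belongs to $\mathcal{G}$, so on each such $H$ it only remains to solve $\Pi^{\lpromise}$.

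To handle $\Pi^{\lpromise}$ on $H \in \mathcal{G}$, I would reduce it to $\Pi^{\linearizable}$ on the contracted multigraph $H'$ obtained by collapsing each octopus gadget to a single white node, keeping every inter-octopus node as a rank-$2$ black node, and turning each port-gadget into a single edge of $H'$ between its octopus's white node and the inter-octopus node it borders. By the description of $\Pi^{\lpromise}$ in \Cref{sec:pi-promise}, the per-port-gadget label sequences and inter-cluster multisets translate exactly into the $(F, L, P, B)$ constraints of $\Pi^{\linearizable}$ on $H'$, with the common label of each port-gadget becoming the edge label of $H'$. Because two port-gadgets of the same octopus may share an inter-octopus node, $H'$ may contain parallel edges, which is precisely why the hypothesis requires $T(n)$ to apply also on multigraphs.

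To simulate the assumed SLOCAL algorithm $\mathcal{A}^{\linearizable}$ under an adversarial order $p = v_1, \ldots, v_n$ on $V(G)$, I would fix a canonical representative in each octopus (e.g., the root of its head-gadget) and let $p'$ be the order on $V(H')$ obtained by listing each octopus by the smallest index of its nodes in $p$. When $v_i$ is processed, I locate its containing octopus $O$ via a local exploration; if $v_i$ is the first node of $O$ in $p$, I execute the step of $\mathcal{A}^{\linearizable}$ that processes $O$ in $p'$, commit the resulting edge labels of $H'$ to the nodes of the corresponding port-gadgets, and store the persistent state of $O$ at its representative. Otherwise, I simply copy the already-committed labels. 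Each single hop in $H'$ costs at most $O(\log n)$ hops in $G$, because every tree-like gadget of height $\ell$ has diameter $O(\ell)$ via the root and the head- and port-gadget heights in the lifted construction are all $O(\log n)$. Hence the radius-$T(n)$ exploration of $H'$ required by $\mathcal{A}^{\linearizable}$ amounts to locality $O(T(n) \log n)$ in $G$, which combined with Phase~1 yields the claimed bound. The main delicate point will be ensuring that the derived order $p'$ is a valid SLOCAL order for $\mathcal{A}^{\linearizable}$ --- i.e., that state is always read only from octopuses strictly earlier in $p'$ --- which is guaranteed by the ``first-time-seen'' rule combined with storing all per-octopus state at the designated representative.
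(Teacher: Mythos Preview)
Your proposal is correct and follows essentially the same two-phase approach as the paper: first solve $\Pi^{\lbadgraph}$ via \Cref{lem:pibadgraph-ub}, then on each component $H\in\mathcal{G}$ contract octopus gadgets to obtain the virtual (multi)graph and simulate the assumed SLOCAL algorithm for $\Pi^{\linearizable}$ with an $O(\log n)$ dilation factor. Your treatment is in fact more detailed than the paper's on how the adversarial order on $G$ induces a valid SLOCAL order on the contracted graph; one small slip is that inter-octopus nodes need not have rank~$2$ in general---the lemma is stated for arbitrary linearizable problems, where black nodes have some fixed constant rank---but this does not affect the argument.
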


For \nonsign, we have the following lower bound.
\begin{lemma}\label{lemma:ns:lower}
	Let $T(n)$ be a lower bound on the locality of $\Pi^{\linearizable}$ in \nonsign with failure probability \(p(n)\), which is a non-increasing function of \(n\) bounded above by some constant $q < 1$. Then any non-signaling outcome for $\Pi^{\lpromise}$ with failure probability at most \(p(n)\) requires locality $\Omega(T(n^{1/3}) \log n)$.
\end{lemma}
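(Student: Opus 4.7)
The plan is to follow the template of the randomized-LOCAL lower bound in \cref{thm:quantum-lift} (from \cite{balliu2025quantum-lcl}), but carry out every step entirely at the level of non-signaling outcomes. Suppose for contradiction that we have a non-signaling outcome $\outcome_\Pi$ for $\Pi^{\lpromise}$ on $n$-vertex graphs with locality $T = o(T(n^{1/3}) \log n)$ and failure probability at most $p(n)$. I will derive a non-signaling outcome for $\Pi^{\linearizable}$ on $m$-vertex hypergraphs, where $m = \Theta(n^{1/3})$, whose locality is $o(T(m))$ and whose failure probability is at most $p(m)$, contradicting the hypothesis.

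First, I would define a \emph{canonical} reduction $\widehat{G} \mapsto G$ that turns any $m$-node hypergraph instance of $\Pi^{\linearizable}$ (possibly with parallel hyperedges, as permitted by \cref{thm:quantum-lift}) into a proper instance $G \in \mathcal{G}$ of size $n = \Theta(m^3)$. Each white node of $\widehat{G}$ becomes an octopus gadget whose head-gadget has height $\Theta(\log m)$ and whose port-gadgets, also of height $\Theta(\log m)$, correspond in the prescribed left-to-right order to the hyperedge-incidences of that node; each hyperedge becomes an inter-octopus node. Padding ensures $n = \Theta(m^3)$, and $G$ can be $(\VV,\EE)$-labeled to lie in $\mathcal{G}$ via \cref{lem:proper-can-be-labeled}. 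The reduction depends only on local structure, and distances between distinct octopus gadgets in $G$ are stretched by exactly $\Theta(\log m)$ compared to distances in $\widehat{G}$. Applying $\outcome_\Pi$ to $G$ and assigning to every hyperedge-incidence of $\widehat{G}$ the $\Sigma$-label that $\outcome_\Pi$ gives to the corresponding port-gadget yields an outcome $\widehat{\outcome}$ on $\widehat{G}$. By the definition of $\Pi^{\lpromise}$ recalled in \cref{sec:pi-promise}, every sample of $\widehat{\outcome}$ is a valid $\Pi^{\linearizable}$ labeling whenever $\outcome_\Pi$ succeeds, so $\widehat{\outcome}$ has failure probability at most $p(n) \le p(m)$ by monotonicity of $p$.

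The crux---and the main obstacle---is to verify that $\widehat{\outcome}$ is non-signaling beyond distance $O(T/\log m)$. Given two half-edge subsets $A_1 \subseteq \widehat{G}_1$ and $A_2 \subseteq \widehat{G}_2$ whose radius-$r$ views agree under an isomorphism $\varphi$, the canonicity of the reduction lifts $\varphi$ to an isomorphism of the radius-$\Theta(r \log m)$ views of the half-edges of $G_1, G_2$ incident to the corresponding port-gadget leaves. Invoking the non-signaling property of $\outcome_\Pi$ at radius $\Theta(r \log m) \le T$ shows that the restricted output distributions on these half-edges in $G_1, G_2$ are isomorphic, and since the read-off step is local, so are the marginals of $\widehat{\outcome}$ on $A_1, A_2$. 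Thus $\widehat{\outcome}$ has locality $r = O(T/\log m) = o(T(m))$, contradicting the lower bound on $\Pi^{\linearizable}$. The delicate bookkeeping is to ensure that the reduction is \emph{purely} local (including port orderings, $(\VV,\EE)$-labels, and any identifier assignment inherited from the input), so that a local isomorphism in $\widehat{G}$ lifts to a local isomorphism in $G$; once this is in place, the transfer of the non-signaling property from $\outcome_\Pi$ to $\widehat{\outcome}$ is essentially tautological.
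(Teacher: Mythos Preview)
Your proposal is correct and takes essentially the same approach as the paper: the paper factors the argument through a separate key lemma (\cref{lemma:ns:lift}) that does exactly the reduction you describe (build the proper instance $G'$ from the bipartite incidence graph, pull back the outcome via the port-gadget correspondence, and verify the non-signaling property survives with the $\Theta(\log n)$ distance blow-up), and then \cref{lemma:ns:lower} is the one-line contrapositive. The only cosmetic difference is that the paper does not pad to force $n = \Theta(m^3)$ but simply uses $m \le N \le m^3$ together with monotonicity of $T$ and $p$.
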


The following lemma is the key ingredient for the proof of \cref{lemma:ns:lower}.
\begin{lemma}\label{lemma:ns:lift}
	Let \(\Pi^{\linearizable}\) be any linearizable problem, and consider the LCL problem \(\Pi = \lift(\Pi^{\linearizable})\).
	Suppose that there exists an outcome \(\outcome\) that is non-signaling beyond distance \(T(n)\) that solves \(\Pi\) with failure probability \(p(n)\), which is a non-increasing function of \(n\) bounded above by some constant $q < 1$.
	Then we can construct an outcome \(\outcome'\) that solves \(\Pi^{\linearizable}\) with failure probability at most \(p(n)\) and is non-signaling beyond distance \(T'(n) = O(T(n^3) / \log n)\).
\end{lemma}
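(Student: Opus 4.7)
The plan is to mimic the quantum-LOCAL argument used to prove \cref{thm:quantum-lift} in \cite{balliu2025quantum-lcl}, replacing the quantum view by the non-signaling view of \cref{def:ns-outcome}. Given a hypergraph $H$ on $n$ vertices that is an instance of $\Pi^{\linearizable}$, I would first build a canonical proper instance $G \in \mathcal{G}$ by replacing each white node of $H$ with an octopus gadget whose head- and port-gadget heights are chosen so that $|V(G)| = \Theta(n^3)$, and each black node of $H$ with an inter-cluster node of $G$ wired to the left-most leaves of the corresponding port-gadgets. By \cref{lem:proper-can-be-labeled} the resulting graph admits a $(\VV,\EE)$-labeling making it an element of $\mathcal{G}$. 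The outcome $\outcome'$ is then defined by running $\outcome$ on $G$ and decoding: for each incidence edge of $H$, read off the $\Sigma$-label that $\outcome$ assigns (consistently, by the port-gadget constraint of $\Pi^{\lpromise}$) to the corresponding port-gadget.

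Correctness is immediate from the layered structure of $\Pi$. Since $G \in \mathcal{G}$, \cref{lem:pibadgraph-valid} forces the $\Pi^{\lbadgraph}$-component of any valid output to be all $\bot$, so the $\Pi^{\lpromise}$-component must be solved on the entire $G$. The defining constraints of $\Pi^{\lpromise}$ recalled in \cref{sec:pi-promise}---first/last/pair constraints on octopus gadgets and multiset constraints on inter-cluster nodes---are exactly the white and black constraints of $\Pi^{\linearizable}$ on $H$. Because the decoding is a $0$-local deterministic function of $\outcome$'s output and $p$ is non-increasing, the failure probability of $\outcome'$ on $H$ is at most $p(|V(G)|) = p(\Theta(n^3)) \le p(n)$.

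For the locality bound I would set $T'(n) = c \cdot T(n^3)/\log n$ for a small enough constant $c$. Fix two hypergraphs $H_1, H_2$ with subsets $A_{H_1}, A_{H_2}$ whose views up to distance $T'(n)$ are isomorphic under some map $\varphi_H$. Because the octopus construction is deterministic and uses only local information (heights of head- and port-gadgets are chosen canonically from the local hypergraph structure), $\varphi_H$ lifts to a graph isomorphism $\varphi_G$ between the corresponding subsets $A_{G_1}, A_{G_2}$ in the lifted graphs. Each octopus gadget has diameter $O(\log n)$, so the preimage of the $T'(n)$-view in $H_i$ contains the entire $T(n^3)$-view of $A_{G_i}$ in $G_i$ once $c$ is taken sufficiently small. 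Applying the non-signaling property of $\outcome$ (\cref{def:ns-outcome}) then gives an isomorphism $\outcome(G_1)[A_{G_1}] \cong \outcome(G_2)[A_{G_2}]$ via $\varphi_G$, and the $0$-local decoding transports it to the isomorphism $\outcome'(H_1)[A_{H_1}] \cong \outcome'(H_2)[A_{H_2}]$ via $\varphi_H$.

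The main obstacle will be the boundary bookkeeping: checking that the lifted map $\varphi_G$ is an isomorphism of views in the precise sense of \cref{def:ns-outcome}, including that half-edges crossing the boundary of the $T(n^3)$-view in $G_i$ correspond in matching ways in $G_1$ and $G_2$. Because the padding that inflates $|V(G)|$ to $\Theta(n^3)$ is applied identically to every octopus gadget, and all intra-octopus structure depends only on node-local information, the construction is invariant under isomorphisms of the local view in $H$, and this bookkeeping goes through exactly as in the quantum argument of \cite{balliu2025quantum-lcl}, with \cref{def:ns-outcome} taking the place of the quantum locality of views.
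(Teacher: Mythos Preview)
Your proposal is correct and follows essentially the same route as the paper: lift the bipartite incidence graph of the hypergraph to a proper instance in $\mathcal{G}$ using octopus gadgets with port-gadgets of height $\Theta(\log n)$, define $\outcome'$ by pulling back the labels $\outcome$ assigns to port-gadget roots, use monotonicity of $p$ for the failure bound, and use the $\Theta(\log n)$ distance blowup between the hypergraph and the lifted instance to transfer the non-signaling property. Two minor points: the lifted instance has size $n \le N \le n^3$ rather than $\Theta(n^3)$ (only the upper bound matters, together with monotonicity of $T$), and the paper phrases the distance argument via the port-gadget height $k(n)$, requiring $T'(n)\cdot k(n) \ge T(N)$, which is exactly your $T'(n) = c\cdot T(n^3)/\log n$ once you unwind it.
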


The proof of \Cref{lemma:ns:lower} is now straightforward.

\begin{proof}[Proof of \cref{lemma:ns:lower}]
	If any non-signaling outcome for \(\Pi\) with failure probability \(p'(n) \le p(n)\) has locality \(o(T(n^{1/3}) \cdot \log n)\), then we can construct a non-signaling outcome that solves \(\Pi^{\linearizable}\) with failure probability at most \(p'(n)\) and has locality \(o(T(n))\) by \cref{lemma:ns:lift}, which is a contradiction with the fact that we assumed $T(n)$ to be a lower bound on the locality of $\Pi^{\linearizable}$.
\end{proof}

\subsection{Instantiating Our Construction}
We are now ready to prove \Cref{thm:separation}, referring to Appendix~\ref{apx:kmw} for a high-level description of the KMW lower bound and to \cite{coupette2021breezing} for a detailed exposition.
We consider the maximal matching problem, and by \Cref{lem:matching-linearizable}, there exists a linearizable problem $P$ that is equivalent to maximal matching.
In SLOCAL, maximal matching, and hence $P$, can be solved in one deterministic round by a trivial greedy algorithm. Hence, by \Cref{lem:slocal-ub}, the deterministic SLOCAL complexity of $\Pi = \lift(P)$ is $O(\log n)$.

For a lower bound, recall that KMW gives us a lower bound of $\Omega(\sqrt{\log n / \log\log n})$ for $O(\log \Delta)$-approximation of fractional maximum matching (see \Cref{thm:kmw}) on graphs of degree $\Delta = 2^{\Theta(\sqrt{\log n \log \log n})}$.
Combining \Cref{thm:ns-lp} with the KMW lower bound and the fact that fractional maximum matching can be expressed as a linear program, we obtain the following corollary.
\begin{corollary}
	There does not exist a non-signaling distribution for $O(\log \Delta)$-approximation of fractional maximum matching with locality $o(\sqrt{\log n / \log\log n})$  on graphs of degree $\Delta = 2^{\Theta(\sqrt{\log n \log \log n})}$.
\end{corollary}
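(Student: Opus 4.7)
The plan is to derive the corollary as a direct contradiction between Theorem~\ref{thm:ns-lp} and Theorem~\ref{thm:kmw}. First, I would observe that fractional maximum matching, as formulated in Section~\ref{ssec:distr-lp}, is an edge-based distributed linear program: variables $x_e$ are non-negative reals, the capacity constraints $\sum_{e \ni v} x_e \le 1$ and objective $\sum_e x_e$ are linear, and coefficients are locally known. Consequently, Theorem~\ref{thm:ns-lp} applies verbatim to any non-signaling distribution that outputs feasible $\alpha$-approximate solutions to this LP.

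Then I would argue by contradiction. Suppose there exists a non-signaling distribution $\outcome$ that produces an $O(\log \Delta)$-approximation of fractional maximum matching in expectation, with locality $T(n) = o(\sqrt{\log n / \log \log n})$, on graphs of degree $\Delta = 2^{\Theta(\sqrt{\log n \log \log n})}$. By Theorem~\ref{thm:ns-lp}, $\outcome$ can be converted into a deterministic \local algorithm $\AA'$ that achieves the same approximation ratio $O(\log \Delta)$ with the same locality $T(n)$, where (crucially) no slack is introduced in either the number of rounds or the approximation factor. Since every deterministic \local algorithm is in particular a \randlocal algorithm that succeeds with probability $1$, the existence of $\AA'$ directly contradicts Theorem~\ref{thm:kmw}, which rules out any such randomized \local algorithm on this family of graphs.

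There is essentially no technical obstacle here: the corollary is a one-line combination of the dequantization theorem with a known classical lower bound, and its subtlety lies only in verifying that the problem statement lines up with the hypotheses. The one point I would emphasize in the write-up is that Theorem~\ref{thm:ns-lp} preserves both the locality $T$ and the approximation ratio $\alpha$ exactly (no logarithmic or constant overhead), so the lower bound transfers without modification. Optionally, I would also remark that the same reasoning yields an analogous non-signaling lower bound for $O(\log \Delta)$-approximations of minimum vertex cover and minimum fractional dominating set, since the KMW machinery covers those LPs as well.
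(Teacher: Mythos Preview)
Your proposal is correct and matches the paper's own justification essentially line for line: the paper also derives the corollary simply by noting that fractional maximum matching is a distributed LP and then combining Theorem~\ref{thm:ns-lp} with the KMW bound (Theorem~\ref{thm:kmw}). Your additional remarks about the lack of overhead in locality and approximation ratio, and the observation that a deterministic \local algorithm is trivially a \randlocal algorithm, are exactly the implicit steps needed to make the paper's one-line derivation rigorous.
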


Since a maximal matching is a $2$-approximation of a maximum matching, and a maximum matching is a $\frac{3}{2}$-approximation of a maximum fractional matching, we obtain the following.
\begin{corollary}\label{cor:mm-ns}
	There does not exist a non-signaling distribution for maximal matching with locality $o(\sqrt{\log n / \log\log n})$.
\end{corollary}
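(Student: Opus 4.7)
The plan is to derive \Cref{cor:mm-ns} by contradiction, using the preceding corollary on fractional maximum matching together with the standard fact that maximal matchings are $3$-approximations to the LP optimum.

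Concretely, I would first suppose that there exists a non-signaling distribution $\outcome$ for maximal matching with locality $T(n) = o(\sqrt{\log n / \log\log n})$, defined on graphs of degree $\Delta = 2^{\Theta(\sqrt{\log n \log\log n})}$ (these are precisely the graphs on which the KMW-based corollary gives its lower bound). Every output in the support of $\outcome$ is, by definition, a maximal matching of the input graph.

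Next, I would interpret the output as a solution to the fractional maximum matching LP by setting $x_e = 1$ on matched edges and $x_e = 0$ otherwise. Since any matching is a feasible fractional matching, this yields a non-signaling distribution $\outcome'$ over feasible solutions to the LP, with the same locality $T(n)$. Invoking the standard bounds recalled in \cref{ssec:distr-lp}, namely that any maximal matching is a $2$-approximation to the maximum integral matching and that the maximum integral matching is within a factor $2/3$ of the fractional optimum, each sample of $\outcome'$ has value at least $\mathrm{OPT}_{\mathrm{frac}}/3$. By monotonicity of expectation, $\outcome'$ is then a $3$-approximation in the sense required by the non-signaling model.

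Finally, for the degree regime $\Delta = 2^{\Theta(\sqrt{\log n \log\log n})}$, we have $\log \Delta = \Theta(\sqrt{\log n \log\log n}) \ge 3$ for all sufficiently large $n$, so any constant approximation factor, including $3$, is certainly $O(\log \Delta)$. Hence $\outcome'$ is a non-signaling distribution achieving an $O(\log \Delta)$-approximation of fractional maximum matching with locality $o(\sqrt{\log n / \log\log n})$, directly contradicting the previous corollary. There is no real obstacle here; the only point to double-check is the bookkeeping that the approximation guarantee of a maximal matching carries over to a non-signaling distribution (which it does, sample-wise, before any expectation is even taken), so that the reduction does not lose factors in locality.
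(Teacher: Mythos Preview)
Your proposal is correct and follows essentially the same approach as the paper: the paper derives \cref{cor:mm-ns} in a single sentence by combining the preceding corollary with the chain of approximation factors (maximal matching $\to$ $2$-approximation of maximum matching $\to$ $3$-approximation of fractional maximum matching), and you spell out the same reduction in more detail. One small wording point: under the paper's conventions a non-signaling outcome solves an LCL only with probability $1 - 1/\poly(n)$, so strictly speaking not \emph{every} sample in the support need be a maximal matching; but since the LP approximation guarantee is only required in expectation, this $1/\poly(n)$ slack is absorbed and the argument goes through unchanged.
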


By combining \Cref{cor:mm-ns} with \Cref{lemma:ns:lower}, we obtain that in \nonsign, $\Pi$ has complexity $\Omega(\log n \cdot \sqrt{\frac{\log n}{\log \log n}})$, as desired.

\section*{Acknowledgements} This work was supported in part by the MUR (Italy) Department of Excellence 2023 - 2027 for GSSI, the European Union - NextGenerationEU under the Italian MUR National Innovation Ecosystem grant ECS00000041 - VITALITY – CUP: D13C21000430001, the PNRR MIUR research project GAMING “Graph Algorithms and MinINg for Green agents” (PE0000013, CUP D13C24000430001), and the Research Council of Finland, Grants 363558 and 359104.

%%%%%%%%%%%%%%%%%%%%%%
%%%% Bibliography %%%%
%%%%%%%%%%%%%%%%%%%%%%

\printbibliography

%%%%%%%%%%%%%%%%%%%%%%
%%%% Appendix %%%%
%%%%%%%%%%%%%%%%%%%%%%

\newpage
\appendix
\begin{center}
	{\LARGE \textbf{Appendix}}
\end{center}
\section{Omitted Definitions}
\label{apx:missing-definitions}

\begin{definition}[Tree-like gadget \cite{balliu20lcl-randomness,balliu2025quantum-lcl}]\label{def:tree-like-gadget}
	A graph $G$ is a \emph{tree-like} gadget of height $\ell$ if it is possible to assign coordinates $(l_u, k_u)$ to each node $u\in G$, where
	\begin{itemize}
		\item $0\le l_u < \ell$ denotes the depth of $u$ in the tree, and
		\item $0\le k_u < 2^{l_u}$ denotes the position of $u$ (according to some order) in layer $l_u$,
	\end{itemize}
	such that there is an edge connecting two nodes $u,v\in G$ with coordinates $(l_u, k_u)$ and $(l_v, k_v)$ if and only if:
	\begin{itemize}
		\item $l_u = l_v$ and $|k_u - k_v | = 1$, or
		\item $l_v = l_u-1$ and $k_v = \lfloor \frac{k_u}{2} \rfloor$, or
		\item $l_u = l_v-1$ and $k_u = \lfloor \frac{k_v}{2} \rfloor$.
	\end{itemize}
\end{definition}
\begin{definition}[Octopus gadget \!\cite{balliu2025quantum-lcl}]\label{def:octopus-gadget}
	Let $x \ge 1$ be a natural number, and \(\eta = (\eta_0, \dots,\allowbreak \eta_{2^{x-1}-1}) \) a vector of \(2^{x-1}\) entries in \(\{1,2\}\).
	Let \(W = \{w_{(i,j)}\}_{(i,j) \in I}\) be a family of positive integer weights, where $I$ is the set containing all pairs $(i,j)$ satisfying \({(i,j) \in \{0,1,\dots,2^{x-1}-1\}}\) \(\times\{1,2\}\)  and \(j \le \eta_{i}\).
	
	A graph \(G = (V,E)\) is an \((x,\eta,W)\)-\emph{octopus gadget} if there exists a labeling \(\lambda: V \to \LL = I \cup \{\text{root}\}\) of the nodes of \(G\) such that the following holds.
	\begin{enumerate}
		\item For each element $y \in \LL$, let \(G_y\) be the subgraph of \(G\) induced by nodes labeled with \(y\). Then, for all \(y \in \LL\), \(G_y\) must be a tree-like gadget according to \cref{def:tree-like-gadget}.
		\item For all \(y,z \in \LL\) such that \(y \neq z\), \(G_y\) and \(G_z\) must be disjoint.
		\item \(G_{\text{root}}\) has height \(x\) and, for all \((i,j) \in I\), \(G_{(i,j)}\) has height \(w_{(i,j)} \in W\).
		\item For all \((i,j) \in I\), there is an edge connecting the node of \(G_{(i,j)}\) that has coordinates \((0,0)\) with the node of \(G_{\text{root}}\) that has coordinates \((x-1,i)\).
	\end{enumerate}
	\(G_{\text{root}}\) is called the \emph{head-gadget} and, for all \((i,j) \in I\), \(G_{(i,j)}\) is called a \emph{port gadget}.
\end{definition}

\begin{definition}[Linearizable problem \cite{balliu2025quantum-lcl}]\label{def:Pi_linearizable}
	Let $H$ be a hypergraph, and let $G$ be its bipartite incidence graph. Let the nodes of $G$ corresponding to the nodes of $H$ be called \emph{white} nodes, and let the nodes of $G$ corresponding to the hyperedges of $H$ be called \emph{black} nodes. 
	\begin{itemize}
		\item The task requires to label each edge of $G$ with a label from some finite set $\Sigma$.
		\item There is a list of allowed \emph{black node configurations} $B$, which is a list of multisets of labels from $\Sigma$ that describes valid labelings of edges incident on a black node. We say that a black node \emph{satisfies the black constraint} if the multiset of labels assigned to its incident edges is in $B$. It is assumed that the rank of $H$, and hence the maximum degree of black nodes, is a constant. 
		\item Constraints on white nodes are described as a triple $(F,L,P)$, where $F$ (which stands for \emph{first}) and $L$ (which stands for \emph{last}) are finite sets of labels, and $P$ (which stands for \emph{pairs}) is a finite set of ordered pairs of labels. In this formalism, it is assumed that an ordering on the incident edges of a white node is given, and it is required that:
		\begin{itemize}
			\item The first edge is labeled with a label from $F$;
			\item The last edge is labeled with a label from $L$;
			\item Each pair of consecutive edges must be labeled with a pair of labels from $P$.
		\end{itemize}
		We say that a white node \emph{satisfies the white constraint} if its incident edges are labeled in a valid way.
	\end{itemize}
	When solving a linearizable problem in the distributed setting, it is assumed that each node knows whether it is white or black.
\end{definition}

\begin{definition}[The definition of $\Pi^{\lpromise}$ of \cite{balliu2025quantum-lcl}, rephrased]
	\label{def:p-promise}
	Given a graph $G \in \mathcal{G}$, the problem $\Pi^{\lpromise}$ requires to label the nodes of the graph as follows:
	\begin{itemize}
		\item Each node that is not in a port gadget must be labeled $\bot$.
		\item Each node that is in a port gadget must be labeled with a label from $\Sigma$.
		\item Nodes that belong to the same port gadget must be assigned the same label.
		\item Let $g$ be an octopus gadget, and let $\ell_1,\ldots,\ell_d$ be the labels assigned to the $d$ port gadgets of $g$, according to the natural left-to-right order of the port gadgets of $g$. It must hold that $\ell_1 \in F$, $\ell_d \in L$, and $(\ell_i,\ell_{i+1}) \in P$ for all $i$.
		\item Let $v$ be an inter-cluster node, and let $\ell_1, \ldots, \ell_r$ be the labels assigned to the nodes of its $r$ incident port gadgets. It must hold that $\{\ell_1,\ldots, \ell_r\} \in B$.
	\end{itemize}
\end{definition}

\section{Omitted Proofs}
\label{apx:missing-proofs}

\begin{proof}[Proof of \cref{lem:expectation-approximation}]
	To show that $\hat x$ is an approximation of $\lpproblem$, we need to establish (1)~that $\hat x$ is feasible and (2)~that $\hat x$ gives the correct approximation ratio. 
	To see the feasibility of $\hat x$, observe that the feasibility constraints are of the form
	\begin{equation*}
		\sum_{i\in \FF}A_{j,i}\cdot x_i \unlhd b_j \quad \forall j\in \CC \;.
	\end{equation*}
	Plugging in $\hat x$ and fixing $j \in \CC$ gives us
	\begin{equation*}
		\sum_{i\in \FF}A_{j,i}\cdot \hat x_i
		= \sum_{i\in \FF}A_{j,i}\cdot \expect{O_i}
		= \mathbb{E}\Bigl[\underbrace{\sum_{i\in \FF}A_{j,i}\cdot O_i}_{ \unlhd b_j}\Bigr]
		\unlhd b_j \;.
	\end{equation*}
	The first equality holds by definition, the second equality holds by linearity of expectation, and the conclusion holds by the fact that $\outcome$ is a distribution over feasible solutions and the monotonicity of expectation.
	Hence, $\hat x$ is a feasible solution for $\lpproblem$.
	
	It is left to show that $\hat x$ is also an $\alpha$-approximation.
	Again, we can plug $\hat x$ into the target function, obtaining
	\begin{equation*}
		\sum_{i\in\FF} c_i\cdot \hat x_i
		= \sum_{i\in\FF} c_i\cdot \expect{O_i}
		= \mathbb{E}\Bigl[\sum_{i\in\FF} c_i\cdot O_i\Bigr] .
	\end{equation*}
	As each outcome of $\outcome$ is an $\alpha$-approximation, we can invoke the linearity and monotonicity of expectation and get that this new target is also an $\alpha$-approximation of~$\lpproblem$.
\end{proof}

\begin{proof}[Proof of \cref{lem:local-expectation}]
	We give the description for the \local algorithm~$\mathcal{A}$:
	Node~$v$ gathers its radius-$T$ neighborhood~$\neigh_T[v]$; this can be done with locality~$T$.
	It then constructs an arbitrary graph~$G' \in \FF$ such that the neighborhood of node~$v' \in V(G')$ is isomorphic to~$\neigh_T[v]$. 
	Note that such a graph always exists, as the graph the algorithm is being run on is one such graph. 
	Now~$v$ invokes the distribution~$\outcome$ on graph~$G'$ to compute the distribution of outputs for node~$v'$ and, in particular, the outcome.
	Node~$v$ then outputs this expected outcome and~halts.
	
	It remains to argue that this algorithm computes the expected outcome of~$\outcome$ everywhere.
	This follows directly from the definition of the \nonsign distributions as the marginals of nodes~$v$ and~$v'$ coincide, and hence their expectations must also coincide.
	Moreover, the choice of graph~$G'$ does not affect this marginal distribution.
\end{proof}

\begin{proof}[Proof of \Cref{lem:matching-linearizable}]
	We define the problem as follows. The label set $\Sigma$ is defined as $\Sigma = \{\lM, \lB,\lA,\lP\}$, where $\lM$, $\lB$, $\lA$, and $\lP$ stand for \emph{matched}, \emph{before}, \emph{after}, and \emph{pointer}, respectively.
	The list of allowed black configurations is defined as
	
	$B = \{\{\lM,\lM\}, \{\lP,\lB\}, \{\lP,\lA\}, \{\lB,\lB\}, \{\lB,\lA\}, \{\lA,\lA\}\}$.
	Then, $F$ is defined as $F = \{\lM,\lB,\lP\}$, $L$ is defined as $L = \{\lM,\lA,\lP\}$, and $P$ is defined as $P = \{(\lB,\lB),(\lB,\lM),(\lM,\lA),(\lA,\lA),(\lP,\lP)\}$.
	
	We observe that the defined problem  $\Pi^{\linearizable}$  satisfies the following properties.
	\begin{itemize}
		\item In any valid solution, for each node $v$, if we treat the labels assigned to the half-edges incident to $v$ as a string (according to the ordering assigned to the half-edges incident to $v$), we obtain that such a string must satisfy the regular expression $\lP^* \mid \lB^* \lM \lA^*$. We call nodes satisfying $\lP^*$ \emph{unmatched} nodes and nodes satisfying $\lB^* \lM \lA^*$ \emph{matched} nodes. Moreover, we call an edge \emph{matched} if both its half-edges are labeled $\lM$.
		\item Since $\{\lP,\lP\} \notin B$, we get that, in any valid solution, unmatched nodes cannot be neighbors.
		\item Since the label $\lM$ appears only in the pair $\{\lM,\lM\}$, and since each matched node must have exactly one incident matched edge, matched edges form an independent set.
	\end{itemize}
	This implies that a solution for $\Pi^{\linearizable}$  can be converted into a maximal matching in $0$ deterministic LOCAL rounds.
	
	We now show that a maximal matching can be converted into a solution for $\Pi^{\linearizable}$ in $0$ deterministic LOCAL rounds. A solution for $\Pi^{\linearizable}$ can be computed as follows:
	\begin{itemize}
		\item Each unmatched node labels $\lP$ all its incident half-edges.
		\item Each matched node labels $\lM$ its incident half-edge $e$ that is part of the matching, $\lB$ all edges that come before $e$ in the given ordering, and $\lA$ all edges that come after $e$ in the given ordering.
	\end{itemize}
	It is easy to see that the computed solution satisfies the constraints of $\Pi^{\linearizable}$.
\end{proof}

\begin{proof}[Proof of \cref{lem:slocal-ub}]
	Let $\mathcal{A}$ be an SLOCAL algorithm for $\problem^{\linearizable}$ with complexity $T(n)$.
	We show how to use $\mathcal{A}$ to solve $\Pi^{\lpromise}$ with SLOCAL complexity $O(T(n) \log n)$. As argued in \Cref{ssec:lift}, this implies a solution for $\Pi$ with the same asymptotic SLOCAL complexity.
	
	Let $G \in \mathcal{G}$ be the graph in which we want to solve  $\Pi^{\lpromise}$.
	Consider the virtual bipartite graph $\widehat{G}$ obtained by contracting each octopus gadget into a single node (see \Cref{fig:proper}), that~is:
	\begin{itemize}
		\item For each octopus gadget $g$ of $G$, there is a white node $v_g$ in $\widehat{G}$.
		\item For each inter-cluster node $b$ of $G$, there is a black node $u_b$ in $\widehat{G}$.
		\item For each edge connecting an inter-cluster node $b$ of $G$ to an octopus gadget $g$ of $G$, there is an edge between $v_g$ and $u_b$ in $\widehat{G}$.
	\end{itemize}
	Note that $\widehat{G}$ may contain parallel edges. 
	Since the diameter of a valid octopus gadget is clearly upper bounded by $O(\log n)$, we get that the distances in $G$ are at most an $O(\log n)$ factor larger than distances in $\widehat{G}$. Thus, it is possible to simulate the execution of an SLOCAL algorithm for $\widehat{G}$ with an $O(\log n)$ multiplicative overhead on $G$. We use $\mathcal{A}$ to solve $\problem^{\linearizable}$ on $\widehat{G}$. For each octopus gadget $g$, we assign the solution of the $i$-th port of $g_v$ to the nodes of the $i$-th port gadget of $g$, according to the natural left-to-right order of the port gadgets of $g$.	The output clearly satisfies the constraints of $\Pi^{\lpromise}$, and the runtime is upper bounded by $O(T(n) \log n)$.
\end{proof}

\begin{proof}[Proof of \Cref{lemma:ns:lift}]
	By hypothesis, there exists a non-signaling outcome \(\outcome\) that solves \(\Pi\) with failure probability \(p(n)\).
	Consider any input hypergraph \(F\) for \(\Pi^{\linearizable}\) of size \(n\), and let $G$ be the bipartite incidence graph of $F$.
	We construct a graph $G'$ as a function of $G$ as follows.
	\begin{itemize}
		\item For each white node $v$ of degree $d$ of $G$, we put an octopus gadget $g_v$ with $d$ port gadgets, each of height $\Theta(\log n)$, into $G'$.
		\item For each black node $u$ of $G$, we put an inter-octopus node $b_u$ in $G'$.
		\item Let $v$ be an arbitrary white node in $G$, and let $\{v,u\}$ be its $i$-th incident edge, according to the given ordering. We put, in $G'$, an edge connecting $b_u$ to the left-most leaf of the $i$-th port gadget of $g_v$, according to the natural left-to-right order of the port gadgets of~$g_v$.
	\end{itemize}
	By construction, $G'$ is a proper instance, and by \Cref{lem:proper-can-be-labeled} it can be labeled such that $G' \in \mathcal{G}$. In the following, we assume that $G'$ is labeled in such a way. Hence, \(G'\) is an input instance for \(\Pi^\lpromise\).
	Moreover, we get that if \(G\) has \(n\) nodes, then \(G'\) has \(n \le N \le n^3\) nodes. 
	
	We define a function $f_G$ as follows. Let $v$ be a white node of $G$, and let $r$ be the root node of the $i$-th port gadget of $g_v$. The function $f_G$ maps $r$ to the $i$-th edge incident to~$v$, according to the given ordering. It is straightforward to see that $f_G$ maps solutions for \(\Pi^\lpromise\) on \(G'\) to solutions for \(\Pi^\linearizable\) on \(G\).
	
	Let \(V_r(G')\) be the domain of \(f_G\), and let \(\Sigma_\lpromise\) be the set of output labels for \(\Pi^\lpromise\) for the nodes in \(V_r(G')\).
	Let \(\{(\oupt_i, p_i)\}_{i \in I}\) be the output distribution that \(\outcome\) defines on \(G'\) for~\(\Pi\). As discussed in \Cref{ssec:lift}, this is also a valid output distribution for $\Pi^{\lpromise}$.
	Note that \(G\) is a bipartite incidence graph and \(\Pi^\linearizable\) asks only to label edges of \(G\).
	This means that only half-edges of \(F\) are labeled.
	We now define an outcome \(\outcome'\) on \(G\) just by describing output labelings on edges of \(G\) (which correspond to half-edges of \(F\))---more specifically, we set \(\outcome'(G) = \{(\oupt_i  \circ f^{-1}_G, p_{i}) \}_{i \in I}\).
	
	First, it is clear that the sum of all \(p_{i}\) in \(\outcome'(G)\) is exactly 1.
	Furthermore, it is straightforward to check that \(\outcome'\) has failure probability at most \(p(n)>0\).
	If not, by construction of \(\outcome'\), then \(\outcome(G')\) has failure probability strictly greater than \(p(n)\) for \(\Pi^\lpromise\), which is a contradiction because \(\outcome\) has failure probability \(p(N) \le p(n)\) by monotonicity of \(p\).
	
	We now claim that \(\outcome'\) is non-signaling beyond distance \(T'(n)\).
	Recall that each octopus gadget in \(G'\) represents a white node \(v\) of \(G\), and neighboring octopus gadgets represent neighboring white nodes of \(G\).
	Hence, for every subset of edges \(A\) of \(E(G)\), the distribution \(\outcome'(G)[A]\) is defined only by \(\{(\oupt_i, p_i)\}_{i \in I}[f^{-1}_G(A)]\).
	Let \(k(n)\) be the height of a port gadget, and observe that  \(k(n) = \Theta(\log n)\).
	Suppose we modify \(G\) outside the radius-\(T'(n)\) view of \(A\) and obtain a graph \(H\) with a subset of edges \(A_H\) such that \(\view_0(A)\) is isomorphic to \(\view_0(A_H)\) and \(\view_{T'(n)}(A)\) is isomorphic to \(\view_{T'(n)}(A_H)\).
	Notice that, as before, \(H\) also defines a proper instance \(H' \in \mathcal{G}\) for \(\Pi^\lpromise\).
	However, because of the isomorphic regions between \(G\) and \(H\), we get that \(\view_0(f^{-1}_G(A))\) is isomorphic to \(\view_0(f^{-1}_H(A_H))\), and \(\view_{T'(n)\cdot k(n)}(f^{-1}_G(A))\) is isomorphic to \(\view_{T'(n)\cdot k(n)}(f^{-1}_H(A_H))\), since each port gadget has height at least \(k(n)\).
	We impose that \(T'(n) \cdot k(n) \ge T(N)\), which is equivalent to asking that \(T'(n) \ge T(N) / k(n)\).
	Since the distribution \(\outcome\) is non-signaling beyond distance \(T(N)\), we have that \(\outcome(G')[\view_0(f^{-1}_G(A))]\) is the same distribution as \(\outcome(H')[\view_0(f^{-1}_H(A))]\).
	Hence, \(\outcome'(G)[A]\) and \(\outcome'(H)[A]\) are equal, and \(\outcome'\) is non-signaling beyond distance \(T'(n)\).
	Note that it is sufficient to take \(T'(n) = O(T(n^3) / \log n) \), since \(T(N)\) is non-decreasing in \(N\) and \(n \le N \le n^3\).
\end{proof}

%%%%%%%%%%%%%%%%%%%%%
%%%% KMW - begin %%%%
%%%%%%%%%%%%%%%%%%%%%

\section{KMW in a Nutshell}\label{apx:kmw}

To instantiate our construction, we use a lower bound for approximating maximum matchings based on the KMW bound \cite{KMW}. For completeness, we now describe the high-level idea of this bound, state it formally, and sketch the key components of its construction that are relevant for our work. 
See Coupette and Lenzen~\cite{coupette2021breezing} for a detailed exposition and a simplified proof.

The KMW bound establishes that there exist graphs with $n$ nodes and maximum degree $\Delta = 2^{\Theta(\sqrt{\log n \log \log n})}$ 
on which  $\Omega(\sqrt{\log n/\log \log n})$ (expected) communication rounds are required 
to obtain polylogarithmic approximations to a minimum vertex cover, minimum dominating set, or maximum matching. 
\begin{theorem}[\cite{KMW}]\label{thm:kmw}
	There does not exist a randomized \local algorithm providing an $O(\log \Delta)$-approximation of fractional maximum matching with locality $o(\sqrt{\log n / \log\log n})$ on graphs with degree $\Delta = 2^{\Theta(\sqrt{\log n \log \log n})}$.
\end{theorem}

The KMW bound holds under both randomization and approximation, 
and it extends to symmetry-breaking tasks like finding maximal independent sets or maximal matchings via straightforward reductions. 

At the core of the bound lies a class of high-girth graphs
constructed from a blueprint, the \emph{Cluster Tree}, 
which arranges differently-sized independent sets of nodes as a tree and prescribes that node sets adjacent in the tree are connected via biregular bipartite graphs. 
Both blueprints and graphs are parametrized by the number of communication rounds~$k$, 
and they are designed to enable an \emph{indistinguishability argument}: 
For a given~$k$, the associated Cluster Tree graph contains two independent sets of nodes, one large and one small, such that both sets of nodes have isomorphic radius-$k$ neighborhoods, 
but only the small set of nodes is needed to solve a given \emph{covering} problem.
This forces any algorithm to select a large fraction of the large node set into the solution (in expectation), 
yielding a poor approximation ratio. 

The construction extends to \emph{packing} problems by taking two copies of a Cluster Tree and additionally prescribing that each node in the first copy is connected to its counterpart in the second copy. 
Importantly, the graphs arising from Cluster Trees are bipartite by design. 
In bipartite graphs, the optimal fractional solution and the optimal integral solution coincide for both minimum vertex cover and maximum matching, 
and by K\H{o}nig's theorem \cite{konig1916graphok}, the solution sets to both problems have the same cardinality. 
Moreover, the two-copy construction of Cluster Trees for packing problems has a natural bicoloring such that the large cluster in the first copy and the small cluster in the second copy have the same color---%
i.e., providing the bicoloring keeps the indistinguishability argument intact. 
Hence, the KMW bound is inherently a bound for (bipartite) fractional problems.

\end{document}